\definecolor{DarkGray}{rgb}{0.1,0.1,0.5}
\def\place #1#2#3{\mspace{#2}\makebox[0pt]{\raisebox{#3}{#1}}\mspace{-#2}}	
\newcommand{\bra}[1]{{\langle#1|}}
\newcommand{\ket}[1]{{|#1\rangle}}
\newcommand{\braket}[2]{{\langle#1|#2\rangle}}
\newcommand{\ketbra}[2]{{\ket{#1}\!\bra{#2}}}
\newcommand{\abs}[1]{{\lvert #1\rvert}}	
\newcommand{\norm}[1]{{\| #1 \|}}
\newcommand{\bignorm}[1]{{\big\| #1 \big\|}}
\newcommand{\Bignorm}[1]{{\Big\| #1 \Big\|}}
\newcommand{\eps}{{\epsilon}}
\newcommand{\fastmatrix}[1]{\left(\begin{smallmatrix}#1\end{smallmatrix}\right)}
\DeclareMathOperator{\Tr}{\operatorname{Tr}}
\def\adjoint{\dagger} 
\def\C {{\bf C}}
\def\H {{\mathcal H}}
\def\R {{\bf R}}
\def\S {{\mathcal S}}
\DeclareMathOperator{\Span}{\operatorname{Span}}
\DeclareMathOperator{\rank}{\operatorname{rank}}
\newcommand{\identity}{\ensuremath{\boldsymbol{1}}} 
\newcommand{\Id}{\identity}
\newcounter{sprows}
\newlength{\spheight}
\newlength{\spraise}
\newcommand{\comment}[1]{\emph{\color{blue}Comment:\color{black} #1}} 
\newlength{\commentslength}
\newcommand{\comments}[1]{
\hspace{-2\parindent}
\addtolength{\commentslength}{-\commentslength}
\addtolength{\commentslength}{\linewidth}
\addtolength{\commentslength}{-\parindent}
\fcolorbox{blue}{white}{\smallskip\begin{minipage}[c]{\commentslength}
\emph{Comments:}\begin{itemize}#1\end{itemize}\end{minipage}}\bigskip
}
\newcommand{\rem}[1]{}
\newtheorem{theorem}{Theorem}[section]
\newtheorem{lemma}[theorem]{Lemma}
\newtheorem{corollary}[theorem]{Corollary}
\newtheorem{claim}[theorem]{Claim}
\newtheorem{definition}[theorem]{Definition}
\newtheorem{remark}[theorem]{Remark}
\newfont{\subsubsecfnt}{ptmri8t at 11pt}
\renewcommand{\subparagraph}[1]{\smallskip{\subsubsecfnt #1.}}
\newcommand{\eqnref}[1]{\hyperref[#1]{{(\ref*{#1})}}}
\newcommand{\thmref}[1]{\hyperref[#1]{{Theorem~\ref*{#1}}}}
\newcommand{\lemref}[1]{\hyperref[#1]{{Lemma~\ref*{#1}}}}
\newcommand{\corref}[1]{\hyperref[#1]{{Corollary~\ref*{#1}}}}
\newcommand{\defref}[1]{\hyperref[#1]{{Definition~\ref*{#1}}}}
\newcommand{\secref}[1]{\hyperref[#1]{{Section~\ref*{#1}}}}
\newcommand{\figref}[1]{\hyperref[#1]{{Figure~\ref*{#1}}}}
\newcommand{\tabref}[1]{\hyperref[#1]{{Table~\ref*{#1}}}}
\newcommand{\remref}[1]{\hyperref[#1]{{Remark~\ref*{#1}}}}
\newcommand{\appref}[1]{\hyperref[#1]{{Appendix~\ref*{#1}}}}
\newcommand{\claimref}[1]{\hyperref[#1]{{Claim~\ref*{#1}}}}
\newcommand{\factref}[1]{\hyperref[#1]{{Fact~\ref*{#1}}}}
\newcommand{\propref}[1]{\hyperref[#1]{{Proposition~\ref*{#1}}}}
\newcommand{\exampleref}[1]{\hyperref[#1]{{Example~\ref*{#1}}}}
\newcommand{\conjref}[1]{\hyperref[#1]{{Conjecture~\ref*{#1}}}}
\def\COLOR{}
\newcommand{\EPRstate}{{\mathrm{EPR}}}
\renewcommand{\comment}[1]{}
\renewcommand{\comments}[1]{}
\begin{document}
\def\compilefullpaper{}

\title{Overlapping qubits}
\author{Rui Chao$^1$ \and Ben W. Reichardt$^1$ \and Chris Sutherland$^1$ \and Thomas Vidick$^2$}
\date{}

\maketitle
\footnotetext[1]{University of Southern California}
\footnotetext[2]{Department of Computing and Mathematical Sciences, California Institute of Technology}

\ifx\compilefullpaper\undefined  
\documentclass[11pt]{article}

\begin{document}
\fi

\begin{abstract}
An ideal system of $n$ qubits has $2^n$ dimensions.  This exponential grants power, but also hinders characterizing the system's state and dynamics.  We study a new problem: the qubits in a physical system might not be independent.  They can ``overlap," in the sense that an operation on one qubit slightly affects the others.  

We show that allowing for slight overlaps, $n$ qubits can fit in just polynomially many dimensions.  (Defined in a natural way, all pairwise overlaps can be $\leq \epsilon$ in $n^{O(1/\epsilon^2)}$ dimensions.)  Thus, even before considering issues like noise, a real system of $n$ qubits might inherently lack any potential for exponential power.  

On the other hand, we also provide an efficient test to certify exponential dimensionality.  Unfortunately, the test is sensitive to noise.  It is important to devise more robust tests on the arrangements of qubits in quantum devices.  
\end{abstract}

\section{Introduction}

Quantum computers start with the qubit, a two-level quantum system.  They achieve their power by combining many qubits.  A system of $n$ independent qubits is associated to a $2^n$-dimensional tensor-product space, $(\C^2)^{\otimes n}$, and quantum algorithms exploit this exponential dimensionality.  However, with great power also comes great guile.  In experiments, it is exceedingly difficult to characterize the states and dynamics of large quantum systems.  An efficient test, running in polynomial time, can only probe a limited portion of an exponentially complex system.  

Before getting to state or process tomography, however, there is the problem of characterizing the system's Hilbert space, and the arrangement of the qubits within it.  In particular, what if the qubits are not in tensor product, but ``overlap", so an operation on one qubit can slightly affect the others?  Given a system that supposedly has $n$ independent qubits, how can we efficiently test that there really are $2^n$ dimensions?  Unfortunately, we show that very small systems, with only polynomially many dimensions, can contain $n$ qubits that are nearly pairwise independent, i.e., an operation on qubit~$i$ can have only a small effect on qubit~$j$ for all $i \neq j$.  In fact, there are particular states in $n^2$-dimensional systems for which $n$ qubits look to be exactly pairwise independent, in tensor product.  (We will give more technical statements of these results in a moment.)  

The issue of overlapping qubits is a new concern for the characterization of quantum devices.  A common complaint about today's quantum devices, especially those targeted at adiabatic quantum optimization or quantum annealing, is that it is difficult even to verify their quantum-ness~\cite{AlbashHenSpedalieriLidar15dwave}.  High noise rates can decohere systems, making them classical.  Our examples raise a different problem: a system might indeed be quantum mechanical and even look like it has many qubits, but still quantum power is lacking because the system is low-dimensional.  

On the other hand, we show that low-dimensional systems cannot totally fool us.  First, if all pairs among $n$ qubits are sufficiently close to being independent, then in fact there are nearby qubits that are exactly independent (in tensor product); and hence the dimension must be at least~$2^n$.  Second, we provide a test for independence, one that efficiently checks not just pairwise interactions but $n$-wise interactions, and thereby can verify that the system dimension is almost~$2^n$.  The test only involves measuring the qubits one at a time, so it is conceivably practical---except it is still sensitive to noise.

\subsubsection*{Overlapping qubits}

The concept of overlapping, dependent qubits is not standard in quantum information theory.  In general, multiple qubits are always assumed to be in tensor product; in common usage $n$ qubits directly {means} $(\C^2)^{\otimes n}$.  However, though it may be invisibly built into our notation and habits of thought, this is in fact an independence assumption, which needs to be justified.  Precisely, then, what is a qubit, and what does it mean for two qubits to overlap?  
\begin{enumerate}
\item 
What is a qubit?  A qubit in a space~$\H$ is a two-dimensional register in tensor product with the rest of the space.  That is, from an isomorphism between $\H$ and $\C^2 \otimes \H'$, the $\C^2$ register defines a qubit.  Since the basis for $\H'$ does not matter, instead of specifying the isomorphism it is more convenient to work in the dual Heisenberg picture, in which a qubit is defined through the observables that act on it, an algebra generated by the four Pauli matrices.  In fact, a pair of norm-one observables $X$ and $Z$ that anti-commute suffice to define a qubit; it is then possible to choose a basis in which $X = \sigma^x \otimes \Id_{\H'}$ and $Z = \sigma^z \otimes \Id_{\H'}$, where $\sigma^x$ and $\sigma^z$ are the standard Pauli operators (see \lemref{t:whatisaqubit}).  
\item 
Two qubits are independent, or in tensor product, when all operators on the qubits commute.  Thus $n$ qubits, defined by anti-commuting $X_j, Z_j$ for $j = 1, \ldots, n$, are pairwise independent if $[X_i, X_j] = [X_i, Z_j] = [Z_i, Z_j] = 0$ for all $i \neq j$.  It follows that there is a change of basis under which $\H = (\C^2)^{\otimes n} \otimes \H'$ and $X_j = \sigma^x_j \otimes \identity_{\H'}$, $Z_j = \sigma^z_j \otimes \identity_{\H'}$ (\thmref{t:whatismanyqubits}).  
\end{enumerate}

When are two qubits ``almost" independent?  For qubits specified by reflections $X_1, Z_1$ and $X_2, Z_2$, how close they are to lying in tensor product can be measured by the largest commutator norm, $\max_{S, T \in \{X, Z\}} \norm{[S_1, T_2]}$.  

Almost independence is a useful concept because in reality one can never probe for the existence of $n$ independent qubits.  The exact tensor-product structure of a Hilbert space cannot be experimentally tested.  Due to inevitable measurement imprecision, one could at best hope to show approximate relations, like $\norm{[S_i, T_j]} \leq \epsilon$.  This concept is also mathematically well-motivated.  It amounts to studying approximate representations of the $n$-qubit Pauli group.\footnote{We caution that there does not seem to be a standard definition for an approximate group representation in the mathematical literature; see, e.g.,~\cite{BabaiFriedl91approximate, MooreRussell15approximate} for work in this direction.}  It can alternatively be tied to questions on the stability of relations defining the Pauli algebra~\cite{Loring93cstar}.

\subsubsection*{Our results}

We begin by asking: how many overlapping qubits can be packed into $2^n$ dimensions?  We prove both lower and upper bounds.  Of course, only $n$ independent qubits fit.  

For the lower bound, we give a randomized construction, based on the Johnson-Lindenstrauss lemma, for packing many nearly orthogonal unit vectors, and on the exterior algebra.  We show that exponential in $n$ many qubits can be packed with pairwise overlaps $\norm{[S_i, T_j]}$ of order $\sqrt{(\log n) / n}$.  In general, for overlaps $\norm{[S_i, T_j]} \leq \epsilon$, $e^{O(n \epsilon^2)}$ qubits can be packed into $2^n$ dimensions; see \thmref{t:qubitpacking}.  Parameterized differently, the construction places $n$ $\epsilon$-overlapping qubits in only $n^{O(1/\epsilon^2)}$ dimensions.  

Note that this construction does not allow for compressing information.  Even though exponentially many nearly independent qubits can be packed into $(\C^2)^{\otimes n}$, this does not allow for reliably storing more than $n$ bits, and thus does not violate Nayak's private information retrieval bound~\cite{Nayak99privateinformationretrieval}.  If one tried to store $\gg n$ bits into $(\C^2)^{\otimes n}$ by putting a bit into each of the embedded qubits, one at a time, by the end the early bits would be unrecoverable because of accumulated errors.  

For the upper bound, we show that even allowing pairwise overlaps $\norm{[S_i, T_j]}$ as large as $c / n$, for a certain constant~$c$, there is still room only for $n$ qubits in $2^n$ dimensions.  The precise statement is in \thmref{t:manynearlyindependentqubits}.  The proof constructively extracts $n$ independent qubits from $n$ overlapping qubits.  The key difficulty is to ensure that errors do not explode; naively separating, say, the second qubit from the first could double its overlap with each of the remaining qubits, yielding an unmanageable exponential blow-up in the total displacement needed to separate the qubits.  See \figref{f:independentandseparatingqubits}.  

\begin{figure}
\centering
\begin{tabular}{c@{$\qquad\qquad$}c}
\subfigure[]{\raisebox{.7cm}{\includegraphics[scale=.2]{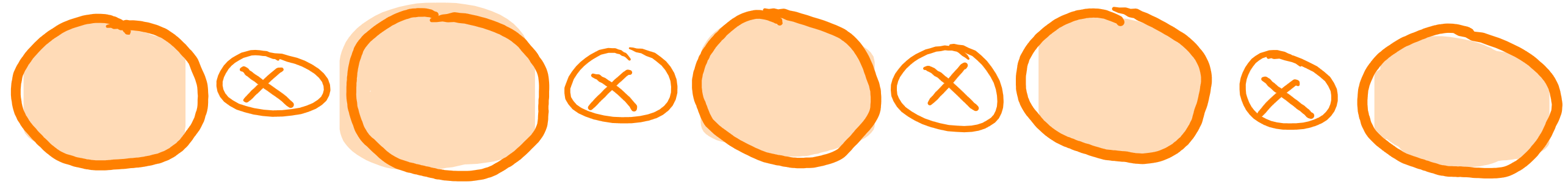}}} & 
\subfigure[]{\raisebox{.25cm}{\includegraphics[scale=.25]{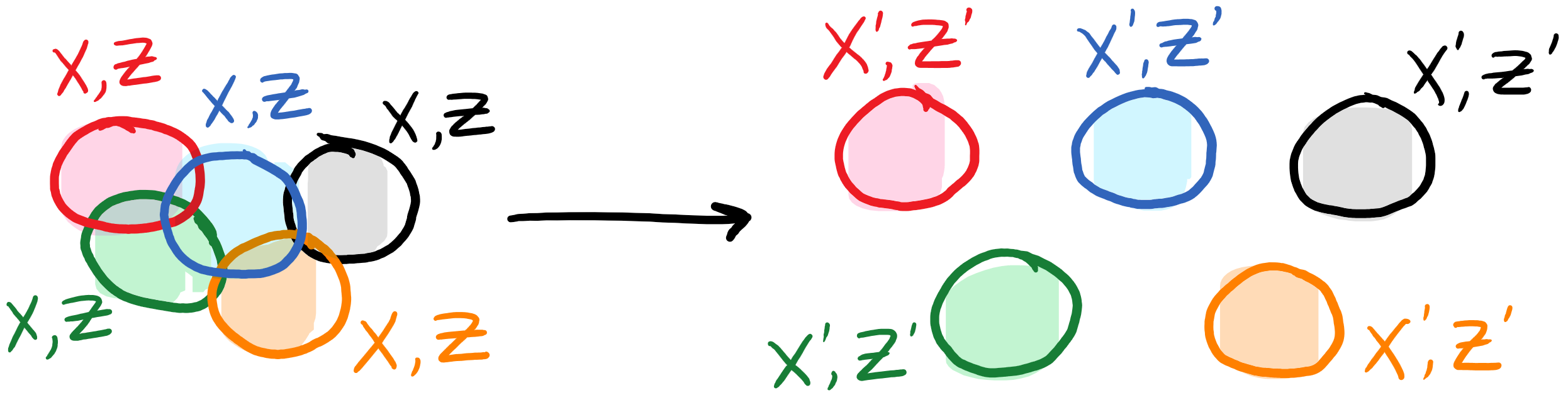}}}
\end{tabular}
\caption{(a) A qubit is a two-dimensional system in tensor product with the rest of the space.  Qubits ``overlap" if the corresponding Pauli operators do not commute.  When their Pauli operators do commute, the qubits are in tensor product with each other (\thmref{t:whatismanyqubits}).  (b) We ask how many qubits can be packed into a $2^n$ dimensional space with small pairwise overlap.  For a lower bound, we give a randomized construction, based on the Johnson-Lindenstrauss Lemma and fermion algebra (\thmref{t:qubitpacking}).  For an upper bound, we separate qubits with small pairwise overlap, finding nearby qubits with zero overlap (\thmref{t:manynearlyindependentqubits}).} \label{f:independentandseparatingqubits}
\end{figure}

The construction in the upper bound loses a factor of~$n$, and we give an example to show that this is necessary (\lemref{t:movementlowerbound}).  Yet there is still a gap between our lower and upper bounds.  For the range of overlaps $1 / n \lesssim \epsilon \lesssim \sqrt{(\log n) / n}$, we do not know whether strictly more than $n$ qubits can be packed into $2^n$ dimensions.  

\smallskip

Given access to an experimental system, it is difficult to imagine tests for determining $\norm{[S_i, T_j]}$.  The problem is that the quantum system can be in an unknown state $\ket \psi$, and we can only learn about operators' effects on $\ket \psi$.  If $S_i$ and~$T_j$ are far from commuting, but only on a portion of the Hilbert space in which $\ket \psi$ has no support, this is undetectable.  In \secref{s:statedependent}, we therefore consider a \emph{state-dependent} overlap measure.  This is the same measure that is used in results on self-testing such as~\cite{MayersYao98chsh,McKagueYangScarani12chshrigidity}, and it is the relevant measure for applications to device-independent cryptography~\cite{Kaniewski14entropic}.  Note however that our setting differs from the usual one in self-testing, as we do not assume any a priori bipartite structure on the Hilbert space; though our results do apply to bipartite entanglement testing~\cite{ChaoReichardtSutherlandVidick16}.  

We first give a practical protocol for testing if $\norm{[S_i, T_j] \ket \psi} \approx 0$: measure $S_i$, measure $T_j$, then measure $S_i$ again and check that it gives the same result.  However, this test is not enough; we give a construction of a state and $n$ qubit operators in $< n^2$ dimensions, such that for $i \neq j$, $[S_i, T_j] \ket \psi = 0$ exactly.  Finally we give a more advanced test that efficiently checks not just pairwise commutation relationships, like $[S_i, T_j] \ket \psi \approx 0$, but also higher-order relationships like $S_i T_j U_k \ket \psi \approx U_k T_j S_i \ket \psi$.  This test can verify that the system dimension is almost~$2^n$.

\ifx\compilefullpaper\undefined  
\bibliographystyle{alpha-eprint}
\bibliography{q}

\end{document}
\fi

\section{What is a qubit?  When are qubits in tensor product?}

As explained in the introduction, we take a basis-independent, operator-centric view of what it means to have a qubit, or multiple independent qubits, in an a priori unstructured Hilbert space~$\H$.  The following definition formalizes these notions.  Notation: Let $[n] = \{1, 2, \ldots, n\}$, and $I = \big(\begin{smallmatrix}1&0\\0&1\end{smallmatrix}\big)$, $\sigma^x = \big(\begin{smallmatrix}0&1\\1&0\end{smallmatrix}\big)$, $\sigma^y = \big(\begin{smallmatrix}0&-i\\i&0\end{smallmatrix}\big)$ and $\sigma^z = \big(\begin{smallmatrix}1&0\\0&-1\end{smallmatrix}\big)$ be the Pauli matrices.  The commutator is $[S, T] = S T - T S$, and the anticommutator is $\{S, T\} = S T + T S$.  When we write, e.g., ``$S_j$ for $S \in \{X, Z\}$" we mean the set $\{X_j, Z_j\}$, i.e., the letter~$S$ is meant to be directly replaced by $X$ or~$Z$.  

\begin{definition}
A \emph{qubit} in a Hilbert space $\H$ is a pair of anti-commuting reflections $(X, Z)$ on~$\H$.  The \emph{overlap} between two qubits $(X_1, Z_1)$ and $(X_2, Z_2)$ is given by $\max_{S, T \in \{X, Z\}} \norm{[S_1, T_2]}$.  The qubits are in \emph{tensor product} if they have overlap~$0$; in this case we also say that the qubits are \emph{independent}.  
\end{definition}

The following simple lemma ties this definition to the more usual one of a qubit as defined by a factorization $\H \simeq \C^2 \otimes \H'$.  The lemma is a special case of \thmref{t:whatismanyqubits} below.  

\begin{lemma} \label{t:whatisaqubit}
Let $X$ and $Z$ be reflections (Hermitian operators that square to the identity) on a separable Hilbert space~$\H$ such that $X$ and $Z$ anti-commute: $\{X, Z\} = 0$.  Then there exists a separable space $\H'$ such that $\H$ is isomorphic to $\C^2\otimes \H'$, and up to a unitary change of basis the reflections $X, Z$ are the standard Pauli operators: 
\begin{equation*}
X = \sigma^x \otimes \identity_{\H'}, \qquad Z = \sigma^z \otimes \identity_{\H'}
 \enspace .
\qedhere
\end{equation*}
\end{lemma}

The following theorem justifies our definition of two qubits being in ``tensor product'' when their overlap is $0$, or equivalently when the associated reflections pairwise commute.  

\begin{theorem} \label{t:whatismanyqubits}
Suppose that $X_1, Z_1, \ldots, X_n, Z_n$ are reflections on~$\H$ such that for all~$j$, $\{X_j, Z_j\} = 0$ and furthermore for all $i \neq j$ and $S,T\in\{X,Z\}$, $S_i$ and $T_j$ pairwise commute, $[S_i, T_j] = 0$.  
Then there exists a separable space $\H''$ such that $\H$ is isomorphic to $(\C^2)^{\otimes n} \otimes \H''$, and up to a unitary change of basis the reflections $X_j, Z_j$ are the standard Pauli operators on $n$ qubits: 
\begin{equation*}
\begin{split}
X_1 &= \sigma^x \otimes I^{\otimes (n-1)} \otimes \identity_{\H''} \\
Z_1 &= \sigma^z \otimes I^{\otimes (n-1)} \otimes \identity_{\H''}
\end{split}
\qquad\quad \cdots \qquad\quad
\begin{split}
X_n &= I^{\otimes (n-1)} \otimes \sigma^x \otimes \identity_{\H''} \\
Z_n &= I^{\otimes (n-1)} \otimes \sigma^z \otimes \identity_{\H''}
 \enspace . 
\end{split}
\end{equation*}
\end{theorem}

\begin{proof}
Let $X = X_1$, $Z = Z_1$.  As $Z^2 = \identity$, $\Pi_\pm = \tfrac12 (\identity \pm Z)$ are projections, with $\Pi_+ + \Pi_- = \identity$, $\Pi_+ - \Pi_- = Z$ and $\Pi_+ \Pi_- = \Pi_- \Pi_+ = 0$.  Multiplying both sides of $\{X, Z\} = 0$ by $\Pi_\pm$ yields $\Pi_\pm X \Pi_\pm = 0$, i.e., $X = \Pi_+ X \Pi_- + \Pi_- X \Pi_+$.  Then $X^2 = \identity$ implies that $\Pi_\pm X \Pi_\mp X \Pi_\pm = \Pi_\pm$; and comparing the ranks of both sides gives $\mathrm{Rank}(\Pi_\mp) \geq \mathrm{Rank}(\Pi_\pm)$, i.e., $\mathrm{Rank}(\Pi_+) = \mathrm{Rank}(\Pi_-)$.  

Let $\ket{u_1^\pm}, \ket{u_2^\pm}, \ldots$ be an orthonormal basis for $\mathrm{Range}(\Pi_\pm)$.  Let $S = \sum_j (\ketbra{u_j^+}{u_j^-} + \ketbra{u_j^-}{u_j^+})$.  Then $S = S^\adjoint$, $S^2 = \identity$ and $S \Pi_\pm = \Pi_\mp S$.  Let $U = \Pi_+ X \Pi_- S + \Pi_-$.  $U$ is unitary: $U U^\adjoint = U^\adjoint U = \identity$.  Furthermore, $U^\adjoint Z U = Z$, and $U^\adjoint X U = S$.  Relabeling the basis elements $\ket{0, j} = \ket{u_j^+}$, $\ket{1, j} = \ket{u_j^-}$, we obtain $U^\adjoint Z U = \sigma^z \otimes \identity$ and $U^\adjoint X U = \sigma^x \otimes \identity$, as desired.  

Now consider $X_2$.  In the above basis, it can be expanded as $I \otimes A + \sum_{\beta \in \{x,y,z\}} \sigma^\beta \otimes B_\beta$, but the commutation relationships $[X_2, X_1] = [X_2, Z_1] = 0$ imply that each $B_\beta = 0$.  Similarly, all the reflections $Z_2, \ldots, X_n, Z_n$ act trivially on the first $\C^2$ register.  Inductively repeating the above argument for $X_1$ and~$Z_1$ gives the theorem.  
\end{proof}

Registers that are in tensor product are independent of each other, in the sense that for a quantum state $\ket \psi \in \H' \otimes \H''$, a quantum operation on $\H'$ cannot affect the reduced density matrix $\Tr_{\H'} \ketbra \psi \psi$ in the other register.  It should be noted, though, that a qubit can simultaneously have maximal overlap with many other mutually independent qubits.  For example, for $n$ odd, $X = (\sigma^x)^{\otimes n}$ and $Z = (\sigma^z)^{\otimes n}$ are anti-commuting reflections, defining a qubit, such that for every $j \in [n]$, $\norm{[X, \sigma^z_j]} = \norm{[Z, \sigma^x_j]} = 2$.  (Similarly, in $(\C^2)^{\otimes n}$, for a Haar random unitary~$U$, $\norm{[U \sigma^\alpha_1 U^\dagger, \sigma^\beta_j]}$ will be concentrated around the maximal value of~$2$.)  Thus the norm of the reflections' commutator is not a ``monogamous" measure of qubit overlap.

\section{Packing qubits} \label{s:packing}

How many pairwise $\eps$-overlapping qubits can be packed into $2^n$ dimensions?  Formally, in $2^n$ dimensions, we wish to place $2 m$ reflections $(X_1, Z_1), \ldots, (X_m, Z_m)$ such that each pair $(X_j,Z_j)$ defines a qubit, so that $\{ X_j, Z_j \} = 0$, and operators with different indices nearly commute: $\norm{[S_i, T_j]} \leq \epsilon$ for $i \neq j$ and $S, T\in \{X,Z\}$. How large can~$m$ be?  

One's intuition might be pulled in either of two directions.  From the perspective of information theory, Nayak's private information retrieval bound $m \leq n / (1 - H(p))$~\cite{Nayak99privateinformationretrieval} suggests that packing $\omega(n)$ qubits into $2^n$ dimensions is unlikely to be possible.  However, a formal connection between our problem and private information retrieval is not obvious: the existence of $m$ pairs of approximately commuting qubit operators does not imply that there exists a family of $2^m$ states that could be used to encode $m$ bits with a good probability of recovery. 

From a geometric perspective the problem can be viewed as one of packing subspaces.  Each reflection $R_j$ is about a certain subspace, projected to by $\tfrac12(I + R_j)$. As explained in the previous section, the anticommutation condition implies that $X_j$ and~$Z_j$ correspond to subspaces with all principal angles $\pi/4$, while the approximate commutation condition $\norm{[S_i, T_j]} \leq \epsilon$ translates into the corresponding subspaces making principal angles close to~$0$ or~$\pi/2$. By analogy to the problem of packing nearly orthogonal unit vectors\footnote{For vector packing upper bounds on~$m$, see, e.g., \cite{KabatjanskiiLevenstein78vectorpacking}, \cite[Lemma~9.1]{Alon03extremal1}, \cite{Tao13vectorpacking}.} one might guess that as long as $\epsilon$ is not required to go to $0$ too fast with $n$, $m$ can be exponential in~$n$.  

The results in this section demonstrate that the geometric intuition is more accurate.  \thmref{t:manynearlycommutingprojections} shows that for sufficiently small $\epsilon$ (inverse linear in~$n$), no more than $m \leq n$ $\epsilon$-overlapping qubits can fit in $2^n$ dimensions.  In contrast, \thmref{t:qubitpacking} shows that as long as $\epsilon = \Omega(1)$, $m$ can be exponential in~$n$; more generally $m = \omega(n)$ for any $\epsilon = \omega(\sqrt{(\log n) / n})$.  For the range of overlaps $1 / n \lesssim \epsilon \lesssim \sqrt{(\log n) / n}$, we do not know whether strictly more than $n$ qubits can be packed into $2^n$ dimensions.

\subsection{Lower bound: packing exponentially many qubits in $2^n$ dimensions} \label{s:packingqubitslowerbound}

We give a randomized construction that packs $m = e^{\Theta(n \epsilon^2)}$ qubits into $2^n$ dimensions.  This beats the trivial $m = n$ for $\epsilon = \Omega(\sqrt{(\log n) / n})$, and is exponential in~$n$ for constant $\epsilon > 0$.  

\begin{theorem} \label{t:qubitpacking}
There exist $2^n$-dimensional reflections $X_1, Z_1, \ldots, X_m, Z_m$, for $m = e^{\Omega(n \epsilon^2)}$, such that $\{ X_j, Z_j \} = 0$ and $\norm{[S_i, T_j]} = O(\epsilon)$ for all $i \neq j$ and $S,T\in\{X,Z\}$.
\end{theorem}

\begin{proof}
By the Johnson-Lindenstrauss Lemma~\cite{JohnsonLindenstrauss84, DasguptaGupta03JohnsonLindenstrauss}, $e^{n \epsilon^2 / 4}$ unit vectors can be chosen in~$\R^{2 n}$ so that for any pair $\ket u, \ket v$, $\abs{\braket u v} \leq \epsilon$.  Collecting these vectors in triples, we obtain $m = \tfrac13 e^{n \epsilon^2 / 4}$ three-dimensional subspaces with the angles between any two in the range $[\tfrac\pi2 - O(\epsilon), \tfrac\pi2]$. Let $\{ \ket{e_j}, \ket{f_j}, \ket{g_j} \}$, for $j \in [m]$, be orthonormal bases for the subspaces.  

Let $C_1, \ldots, C_{2 n}$ denote a $2^n$-dimensional representation of the Clifford algebra, i.e., Hermitian matrices that satisfy $\{C_i, C_j\} = 2 \delta_{ij} \Id$.  For each $j \in [m]$, let 
\begin{align*}
E_j &= \sum_k \braket{k}{e_j} \, C_k &
F_j &= \sum_k \braket{k}{f_j} \, C_k &
G_j &= \sum_k \braket{k}{g_j} \, C_k
 \enspace .
\end{align*}
Then it is easy to check that for distinct $S, T \in \{E, F, G\}$, $\{S_j, T_j\} = 0$ and $\norm{\{S_i, T_j\}} = O(\epsilon)$ for $i \neq j$.  Let $X_j = i E_j F_j$ and $Z_j = i E_j G_j$; these matrices are Hermitian, square to $\identity$, and anti-commute.  Moreover, for $i \neq j$ and $S,T \in \{X, Z\}$, we have $\norm{[S_i, T_j]} = O(\epsilon)$.  
\end{proof}

\appref{s:qubitpackingprooftwo} gives an alternative proof of \thmref{t:qubitpacking} using the exterior algebra.

\subsection{Upper bound: Separating overlapping qubit operators} \label{s:packingqubitsupperbound}

We provide two different methods for creating independent qubits from partially overlapping qubits.  The first argument, given in \secref{s:blockdiagonalization}, performs a careful analysis of a sequential block-diagonalization procedure.  The second argument, in \secref{s:swapnorm}, is simpler but requires the introduction of a larger Hilbert space in which to define the approximating operators.

\subsubsection{Separating nearly commuting projections} \label{s:blockdiagonalization}

We first consider the case of separating projections that nearly commute pairwise.  

\begin{theorem} \label{t:manynearlycommutingprojections}
Let $P_1, \ldots, P_n$ be projections on a finite-dimensional Hilbert space such that for some $\epsilon \leq \tfrac{1}{32 n}$, 
\begin{equation*}
\norm{[P_i, P_j]} \leq \epsilon \qquad \text{for all $i, j$.}
\end{equation*}
Then there exist projections $Q_1, \ldots, Q_n$ with, for all $i, j$, 
\begin{align*}
[Q_i, Q_j] &= 0 \\
\norm{P_i - Q_i} &\leq 8 n \epsilon
 \enspace .
\end{align*}
\end{theorem}

The bound in \thmref{t:manynearlycommutingprojections} is nearly tight; see \lemref{t:movementlowerbound} below.  

The proof of the theorem is constructive.  It uses two basic operations, that we analyze with two lemmas.  First we block-diagonalize operators with respect to a projection~$Q$ so that they commute with~$Q$.  The first lemma bounds how block-diagonalizing two operators affects their commutator.  

\begin{lemma} \label{t:blockdiagonalizedcommutator}
Let $Q$ be a projection, and for operators $P_i$, $i = 1, 2$, let $P_i' = Q P_i Q + (\identity-Q) P_i (\identity-Q)$.  Then $[Q, P_i'] = 0$, $\norm{P_i' - P_i} = \norm{[Q, P_i]}$, and 
\begin{equation*}\begin{split}
\norm{[P_1', P_2']} &\leq \norm{[P_1, P_2]} + 2 \norm{[Q, P_1]} \cdot \norm{[Q, P_2]}
 \enspace .
\end{split}\end{equation*}
\end{lemma}

\begin{proof}
Work in a basis in which $Q$ is diagonal: $Q = \fastmatrix{\identity & 0\\0 & 0}$.  Then $P_i = \big(\begin{smallmatrix}A_i & B_i \\ C_i & D_i\end{smallmatrix}\big)$ and $P_i' = \big(\begin{smallmatrix}A_i & 0 \\ 0 & D_i\end{smallmatrix}\big)$.  As $[Q, P_i] = \big(\begin{smallmatrix}0 & B_i \\ -C_i & 0\end{smallmatrix}\big)$, $\norm{P_i' - P_i} = \max\{ \norm{B_i}, \norm{C_i} \} = \norm{[Q, P_i]}$.  We also compute 
\begin{align*}
[P_1, P_2] &= \fastmatrix{
[A_1, A_2] + B_1 C_2 - B_2 C_1
& A_1 B_2 + B_1 D_2 - A_2 B_1 - B_2 D_1 \\
C_1 A_2 + D_1 C_2 - C_2 A_1 - D_2 C_1 
& [D_1, D_2] + C_1 B_2 - C_2 B_1
}
 \enspace .
\end{align*}
Each diagonal block in $[P_1, P_2]$ above, $Q [P_1, P_2] Q$ and $(\identity - Q) [P_1, P_2] (\identity - Q)$, must have norm at most $\norm{[P_1, P_2]}$.  The claimed bound for $\norm{[P_1', P_2']} = \max\{ \norm{[A_1, A_2]}, \norm{[D_1, D_2]} \}$ follows.  
\end{proof}

When one block-diagonalizes a projection, the result might not be a projection.  The second basic operation consists in rounding the eigenvalues to the closest integer, $0$ or $1$.  The second lemma bounds how this affects the commutator with another operator.  

\begin{lemma} \label{t:perturbedcommutator}
Let $Q$ be a projection and $Q'$ Hermitian with $[Q, Q'] = 0$ and $\norm{Q - Q'} < 1/2$.  Then for any Hermitian~$P$, 
\begin{equation*}
\norm{[Q, P]} \leq \frac{\norm{[Q', P]}}{1 - 2 \norm{Q - Q'}}
 \enspace .
\end{equation*}
\end{lemma}

\noindent
This bound can be much stronger than the trivial $\norm{[Q, P]} \leq \norm{[Q', P]} + 2 \norm{P} \norm{Q - Q'}$.\footnote{For $P \succeq 0$, trivially $\norm{[Q, P]} \leq \norm{[Q', P]} + \norm{[Q - Q', P - \tfrac{\norm{P}}{2} \identity]} \leq \norm{[Q', P]} + \norm{P} \norm{Q - Q'}$, but \lemref{t:perturbedcommutator} is still stronger.}  It follows by substituting $A = \Big(\begin{smallmatrix}0 & P (2 Q - \identity) \\ (2Q - \identity) P & 0\end{smallmatrix}\Big)$, $B = \Big(\begin{smallmatrix}0 & (2 Q - \identity) P \\ P (2 Q - \identity) & 0\end{smallmatrix}\Big)$ and $\Gamma = \abs{2 Q' - \identity} \oplus \abs{2 Q' - \identity}$ into the following theorem, and using $\abs{2 Q' - \identity} (2 Q - \identity) = (2 Q - \identity) \abs{2 Q' - \identity} = 2 Q' - \identity$.  

\begin{theorem}[{\cite[Theorem~1]{BhatiaDavisKittaneh91perturbcommutator}}] \label{t:bhatiaroundingeigenvalues}
If $A$ and $B$ are Hermitian, and $\Gamma \succ 0$, then 
\begin{equation*}
\norm{A - B} \leq \norm{\Gamma^{-1}} \cdot \norm{A \Gamma - \Gamma B}
 \enspace .
\end{equation*}
\end{theorem}

\begin{proof}[Proof of \thmref{t:manynearlycommutingprojections}]
We proceed inductively.  The induction hypothesis is that we have defined $Q_1, \ldots, Q_k, P_{k+1}^{(k)}, \ldots, P_{n}^{(k)}$ such that 
\begin{itemize}
\item $0 \preceq P_j^{(k)} \preceq \identity$, $\norm{P_j^{(k)} - P_j} \leq \delta_k$, $\norm{[P_i^{(k)}, P_j^{(k)}]} \leq \epsilon_k$.  
\item $Q_1, \ldots, Q_k$ are projections, commuting with each other and all $P_j^{(k)}$, with $\norm{P_k - Q_k} \leq 2 \delta_{k-1}$.  
\end{itemize}
For the base case, $\delta_0 = 0$ and $\epsilon_0 = \epsilon$.  

In the induction step, we let $Q_{k+1}$ be the projection formed by rounding $P_{k+1}^{(k)}$'s eigenvalues to $0$ or~$1$, and define $P_{k+2}^{(k+1)}, \ldots, P_n^{(k+1)}$ by block-diagonalizing the $P_j^{(k)}$ operators with respect to $Q_{k+1}$: 
\begin{equation*}
P_j^{(k+1)} = Q_{k+1} P_j^{(k)} Q_{k+1} + (\identity-Q_{k+1}) P_j^{(k)} (\identity-Q_{k+1})
 \enspace .
\end{equation*}
Indeed, then $\norm{Q_{k+1} - P_{k+1}} \leq \norm{P_{k+1}^{(k)} - P_{k+1}} + \norm{Q_{k+1} - P_{k+1}^{(k)}} \leq 2 \delta_k$.  Also, $0 \preceq P_j^{(k+1)} \preceq \identity$.  Using \lemref{t:blockdiagonalizedcommutator}, we compute 
\begin{align*}
\norm{P_j^{(k+1)} - P_j}
&\leq \norm{P_j^{(k)} - P_j} + \norm{P_j^{(k+1)} - P_j^{(k)}} \\
&\leq \delta_k + \norm{[Q_{k+1}, P_j^{(k)}]} \\
\norm{[P_i^{(k+1)}, P_j^{(k+1)}]}
&\leq \norm{[P_i^{(k)}, P_j^{(k)}]} + 2 \norm{[Q_{k+1}, P_i^{(k)}]} \cdot \norm{[Q_{k+1}, P_j^{(k)}]}
 \enspace .
\end{align*}
Thus we may take $\delta_{k+1} = \delta_k + \max_j \norm{[Q_{k+1}, P_j^{(k)}]}$ and $\epsilon_{k+1} = \epsilon_k + 2 \max_j \norm{[Q_{k+1}, P_j^{(k)}]}{}^2$.  It remains to bound $\max_j \norm{[Q_{k+1}, P_j^{(k)}]}$.  

The naive bound $\norm{[Q_{k+1}, P_j^{(k)}]} \leq \norm{[P_{k+1}^{(k)}, P_j^{(k)}]} + 2 \norm{Q_{k+1} - P_{k+1}^{(k)}} \leq \epsilon_k + 2 \delta_k$ is no good, as it allows the errors to grow exponentially with~$k$.  Instead, applying \lemref{t:perturbedcommutator} gives 
\begin{equation*}
\bignorm{[Q_{k+1}, P_j^{(k)}]}
\leq \frac{\epsilon_k}{1 - 2 \delta_k}
 \enspace .
\end{equation*}
Provided that all $\epsilon_k \leq 2 \epsilon$ and $\delta_k \leq 1/4$, $(1 - 2 \delta_k)^{-1} \leq 2$, and we obtain the recursions 
\begin{equation*}\begin{split}
\delta_{k+1} &\leq \delta_k + 2 \epsilon_k \leq \delta_k + 4 \epsilon \\
\epsilon_{k+1} &\leq \epsilon_k + 8 \epsilon_k^2 \leq \epsilon_k + 32 \epsilon^2
 \enspace .
\end{split}\end{equation*}
Thus $\delta_{k+1} \leq 4 (k+1) \epsilon$ and $\epsilon_{k+1} \leq \epsilon + 32 k \epsilon^2$.  Given $\epsilon \leq \tfrac{1}{32 n}$, indeed $\epsilon_k \leq 2 \epsilon$ and $\delta_k \leq 1/4$.  
\end{proof}

\subsubsection{Separating partially overlapping qubits}

The following theorem is an extension of \thmref{t:manynearlycommutingprojections} which allows us to separate $\eps$-overlapping qubits.   

\begin{theorem} \label{t:manynearlyindependentqubits}
Let $X_1, Z_1, \ldots, X_n, Z_n$ be Hermitian matrices each having eigenvalues in the range $[-1, -1+\epsilon] \cup [1-\epsilon, 1]$, and satisfying $\norm{\{X_j, Z_j\}} \leq \epsilon$ and $\norm{[S_i, T_j]} \leq \epsilon$ for all $i \neq j$ and $S, T \in \{X, Z\}$.  Assume $\epsilon / (1-\epsilon)^2 \leq \tfrac{1}{64 n}$.  
Then there exist reflections $X_1', Z_1', \ldots, X_n', Z_n'$ with $\{X_j', Z_j'\} = 0$, and $[S_i', T_j'] = 0$ and $\norm{S_j' - S_j} \leq 4 n \epsilon / (1-\epsilon)^2 + \epsilon$ for all $i \neq j$ and $S, T \in \{X, Z\}$.  
\end{theorem}

\begin{proof}
Let $\H$ be the finite-dimensional Hilbert space on which the matrices act.  Introduce $n$ additional qubits, and on $(\C^2)^{\otimes n} \otimes \H$, define 
\begin{align*}
R_{2j-1}' &= \sigma_j^x \otimes X_j \\
R_{2j}' &= \sigma_j^z \otimes Z_j
 \enspace ,
\end{align*}
for $j = 1, \ldots, n$, where $\sigma_j^x$ and $\sigma_j^z$ are the standard Pauli operators acting on the $j$th added qubit.  

For Pauli operators $\sigma$ and~$\tau$, 
\begin{equation*}
[\sigma \otimes A, \tau \otimes B] =  \begin{cases} 
(\sigma \tau) \otimes [A, B] & \text{if $[\sigma, \tau] = 0$} \\
(\sigma \tau) \otimes \{A, B\} & \text{if $\{\sigma, \tau\} = 0$ \enspace .}
\end{cases}
\end{equation*}
Thus for all $i, j$, 
\begin{equation*}
\norm{[R_i', R_j']} \leq \epsilon
 \enspace .
\end{equation*}

Define reflections $R_1, \ldots, R_{2n}$ by rounding to $\pm 1$ the eigenvalues of each of $R_1', \ldots, R_{2n}'$.  The operators $R_j$ still have the form $(\text{Pauli}) \otimes (\text{Reflection})$.  By \thmref{t:bhatiaroundingeigenvalues}, 
\begin{equation*}
\norm{[R_i, R_j]} \leq \frac{1}{(1 - \epsilon)^2} \epsilon
 \enspace .
\end{equation*}
Define projections $P_1, \ldots, P_{2n}$ by $P_j = \tfrac12 (\identity + R_j)$.  Then 
\begin{equation*}\begin{split}
\norm{[P_i, P_j]}
&= \tfrac14 \norm{[R_i, R_j]} \\
&\leq \frac14 \frac{1}{(1-\epsilon)^2} \epsilon
 \enspace .
\end{split}\end{equation*}

Applying \thmref{t:manynearlycommutingprojections} for separating projections yields projections $Q_1, \ldots, Q_{2n}$ satisfying $[Q_i, Q_j] = 0$ and 
\begin{equation*}
\norm{Q_j - P_j} \leq 8 \cdot (2n) \cdot \frac14 \frac{1}{(1-\epsilon)^2} \epsilon = \frac{4 n \epsilon}{(1 - \epsilon)^2}
 \enspace ,
\end{equation*}
provided that $\epsilon / (1-\epsilon)^2 \leq 1/(64n)$.  

We claim that the reflections $2 Q_{2j-1} - \identity$ and $2 Q_{2j} - \identity$ still have the form $\sigma_j^x \otimes X_j'$ and $\sigma_j^z \otimes Z_j'$, resepectively, for reflections $X_j'$ and~$Z_j'$ on $\H$.  Indeed, the proof of the projections separation theorem, \thmref{t:manynearlycommutingprojections}, involved two basic operations: 
\begin{enumerate}
\item Block-diagonalizing an operator~$A$ with respect to a reflection~$R$: 
\begin{align*}
A
&\rightarrow \tfrac12 (\identity + R) A \tfrac12 (\identity + R) + \tfrac12 (\identity - R) A \tfrac12 (\identity - R) \\
&= \frac12 (A + R A R)
 \enspace .
\end{align*}
\item Rounding the eigenvalues of a Hermitian operator~$A$ to $\pm 1$.  
\end{enumerate}
Observe that if $A = \sigma \otimes A'$ for a Pauli~$\sigma$, and $R = \tau \otimes R'$ for a Pauli~$\tau$, then both of these basic operations result in an operator $\sigma \otimes A''$, for the same Pauli~$\sigma$.  

Thus indeed $\{X_j', Z_j'\} = 0$ and $[S_i', T_j'] = 0$ for $i \neq j$ and $S, T \in \{X, Z\}$.  Also $\norm{Q_j - P_j} \leq 4 n \epsilon / (1-\epsilon)^2$ implies 
\begin{align*}
\norm{S_j' - S_j} 
&\leq 2 \norm{Q_j - P_j} + \norm{R_j' - R_j} \\
&\leq \frac{8 n \epsilon}{(1-\epsilon)^2} + \epsilon
 \enspace . \qedhere
\end{align*} 
\end{proof}

Since \thmref{t:manynearlyindependentqubits} yields $n$ qubits in tensor product, the dimension of the ambient space~$\H$ must be at least $2^n$.  Rephrasing this, we obtain: 

\begin{corollary}
In $2^n$ dimensions, at most $n$ qubits can be placed with pairwise ``overlaps" $\norm{[S_i, T_j]} \leq \epsilon$, if $\epsilon / (1-\epsilon)^2 \leq 1/(64 n)$.  
\end{corollary}

\subsubsection{SWAP-based argument} \label{s:swapnorm}

If we are willing to work in a larger space, then there is a simpler argument for moving overlapping qubits into tensor product.  Instead of repeatedly block-diagonalizing operators and rounding their eigenvalues to $\pm 1$, as in \thmref{t:manynearlyindependentqubits}, we can swap in fresh qubits to enforce a tensor-product structure.  We will show: 

\begin{theorem} \label{t:swappingmanynearlyindependentqubits}
Let $X_1, Z_1, \ldots, X_n, Z_n$ be reflections on~$\H$, satisfying $\{X_j, Z_j\} = 0$ and $\norm{[S_i, T_j]} \leq \epsilon$ for all $i \neq j$ and $S, T \in \{X, Z\}$.  Extend these operators by the identity to act on $\H \otimes (\C^2)^{\otimes n}$.  

Then there exist reflections $X_1', Z_1', \ldots, X_n', Z_n'$ on $\H \otimes (\C^2)^{\otimes n}$, with $\{X_j', Z_j'\} = 0$, $[S_i', T_j'] = 0$ and $\norm{S_j' - S_j} \leq 2 n \epsilon$.  
\end{theorem}

\begin{proof}
For $j \in [n]$, let $\S_j = \frac12 \big( \identity \otimes \identity + X_j \otimes \sigma^x_j + Z_j \otimes \sigma^z_j + i (X_j Z_j) \otimes \sigma^y_j \big)$.  Acting on $\H \otimes (\C^2)^{\otimes n}$, $\S_j$ swaps the $j$th added $\C^2$ register with the qubit defined by $X_j, Z_j$.  

For $T \in \{X, Z\}$ and $i \in \{ 1, \ldots, j \}$ define 
\begin{equation*}
T_j^{(i)} = (\S_1 \cdots \S_{i-1}) \, T_j \, (\S_{i-1} \cdots \S_1)
 \enspace .
\end{equation*}
Let $T_j' = T_j^{(j)} = (\S_1 \cdots \S_{j-1}) T_j (\S_{j-1} \cdots \S_1)$.  

Then for $i < j$, $\norm{[S_i', T_j']} = \norm{[S_i, \S_i \cdots \S_{j-1} T_j \S_{j-1} \cdots \S_i]}$.  This is $0$, since for any operator~$A$ that is the identity on the $i$th added $\C^2$ register, $[S_i, \S_i A \S_i] = 0$.  

Furthermore, 
\begin{align*}
\norm{T_j' - T_j}
&\leq \sum_{i=1}^{j-1} \norm{T_j^{(i+1)} - T_j^{(i)}} \\
&= \sum_{i=1}^{j-1} \norm{\S_i T_j \S_i - T_j} \\
&= \sum_{i=1}^{j-1} \norm{[\S_i, T_j]} \\
&\leq \frac12 \sum_{i=1}^{j-1} \big( \norm{[X_i, T_j]} + \norm{[Z_i, T_j]} + \norm{[X_i Z_i, T_j]} \big) \\
&\leq 2 \epsilon (j-1)
 \enspace . \qedhere
\end{align*}
\end{proof}

Since \thmref{t:swappingmanynearlyindependentqubits} works in the larger space $\H \otimes (\C^2)^{\otimes n}$, unlike \thmref{t:manynearlyindependentqubits} it does not give an upper bound on the number of nearly independent qubits that can be packed into~$\H$.

\subsubsection{Lower bound: Sometimes $\Omega(n \epsilon)$ movement is necessary}

\thmref{t:manynearlyindependentqubits} shows that $n$ qubits with pairwise ``overlaps" at most $\epsilon$ can be separated into tensor product by moving each qubit $O(n \epsilon)$ in operator norm.  Is the loss of a factor of~$n$ necessary?  The following example shows that our bound is essentially tight.  

\begin{lemma} \label{t:movementlowerbound}
For any integer $n$, and any $\epsilon \in [0, \pi / n^2]$, there exist $2 n$ qubits $X_1, Z_1, \ldots, X_{2n}, Z_{2n}$ in $(\C^2)^{\otimes (2n)}$ such that $\norm{[S_i, T_j]} \leq \epsilon$ for all $i \neq j$ and $S,T\in\{X,Z\}$ but such that for any \emph{independent} qubits $X_1', Z_1', \ldots, X_{2n}', Z_{2n}'$ (with $[S_i', T_j'] = 0$ for $i \neq j$), 
\begin{equation*}
\max_{\substack{1 \leq j \leq 2n \\ S \in \{X, Z\}}} \bignorm{S_j - S_j'} \geq \frac{n \epsilon}{2 \pi}
 \enspace .
\end{equation*}
\end{lemma}

\begin{proof}
Construct qubits $X_j, Z_j$ as the standard qubits, except with the second~$n$ qubit operators perturbed by the Hamiltonian 
\begin{equation*}
H = \tfrac14 (\sigma^z_1 + \cdots + \sigma^z_n) (\sigma^z_{n+1} + \cdots + \sigma^z_{2n})
 \enspace .
\end{equation*}
That is, $X_j = \sigma^x_j$, $Z_j = \sigma^z_j$ for $j \leq n$, and $X_j = e^{i \epsilon H} \sigma^x_j  e^{-i \epsilon H}$, $Z_j = e^{i \epsilon H} \sigma^z_j  e^{-i \epsilon H} = \sigma^z_j$ for $j > n$.  Then if $j, k \leq n$ or $j, k > n$, the operators for qubits $j$ and~$k$ commute.  If $j \leq n < k$, then the operators for qubits~$j$ and~$k$ commute, except for $X_j$ and $X_k$.  We compute $\norm{[X_j, X_k]} = \norm{X_j X_k X_j - X_k} = \norm{ e^{-i \epsilon H} \sigma^x_j e^{i \epsilon H} \sigma^x_k e^{-i \epsilon H} \sigma^x_j e^{i \epsilon H} - \sigma^x_k } = \norm{ e^{i \epsilon \sigma^z_j \sigma^z_k} - \identity } = \abs{e^{i \epsilon} - 1} \leq \epsilon$.  

Let $X'_1, \ldots, X'_{2n}$ be any pairwise commuting reflections.  Let $J = \{1, \ldots, n\}$, $K = \{n+1, \ldots, 2n\}$.  Let $X_J = \prod_{j \in J} X_j$, $X_K = \prod_{k \in K} X_k$.  Similarly define $X_J', X_K'$ and $\sigma^x_J, \sigma^x_K$.  Thus $X_J = \sigma^x_J$, $X_K = e^{i \epsilon H} \sigma^x_K e^{-i \epsilon H}$.  In order to lower-bound $\max_j \norm{X_j - X_j'}$, we study $\norm{[X_J, X_K]} = \norm{(X_J X_K)^2 - \identity}$.  

On one hand, since the $X'_j$ operators commute, $(X_J' X_K')^2 = \identity$.  By triangle inequalities, and using $\norm{X_j} = \norm{X_j'} = 1$ for all~$j$, $\norm{X_J X_K - X_J' X_K'} \leq \sum_j \norm{X_j - X_j'}$, and hence 
\begin{equation} \label{e:movementlowerbound1}
\norm{(X_J X_K)^2 - \identity} \leq 2 \sum_j \norm{X_j' - X_j} \leq 4 n \cdot \max_j \norm{X_j' - X_j}
 \enspace .  
\end{equation}
On the other hand, 
\begin{align*}
(X_J X_K)^2
&= \sigma^x_J \big( e^{i \epsilon H} \sigma^x_K \, e^{-i \epsilon H} \big) \sigma^x_J \big( e^{i \epsilon H} \sigma^x_K \, e^{-i \epsilon H} \big) \\
&= e^{-i \epsilon H} \sigma^x_K \, e^{2 i \epsilon H} \sigma^x_K \, e^{-i \epsilon H} \\
&= e^{-4 i \epsilon H}
 \enspace .
\end{align*}
Since $\norm{H} = n^2/4$, provided that $n^2 \epsilon \leq \pi$ it holds that 
\begin{equation} \label{e:movementlowerbound2}
\bignorm{(X_J X_K)^2 - \identity} = \abs{e^{i n^2 \epsilon} - 1} \geq \frac{2}{\pi} \cdot n^2 \epsilon
 \enspace .
\end{equation}
Combining the bounds~\eqnref{e:movementlowerbound1} and~\eqnref{e:movementlowerbound2} gives $\frac{2}{\pi} n^2 \epsilon \leq \norm{(X_J X_K)^2 - \identity} \leq 4 n \cdot \max_j \norm{X_j' - X_j}$, or $\max_j \norm{X_j' - X_j} \geq n \epsilon / (2 \pi)$.  
\end{proof}

\ifx\compilefullpaper\undefined  
\documentclass[11pt]{article}

\begin{document}
\fi

\section{State-dependent qubit separation} \label{s:statedependent}

A problem with both \thmref{t:manynearlyindependentqubits} and \thmref{t:swappingmanynearlyindependentqubits} is that they might be difficult to apply to real experimental systems.  This is because it is difficult to establish the assumption of qubits nearly in tensor product, $\norm{[S_i, T_j]} \leq \epsilon$ for $i \neq j$ and $S, T\in\{X,Z\}$.  In addition to the operators, a physical system involves an underlying state~$\ket \psi$.  The operators can be understood only in terms of their effects on $\ket \psi$.  Consider for example a Hilbert space that splits as $\H \oplus \H'$, where $\ket \psi$ is supported only on $\H$ and available operators leave $\H$ invariant.  Then there is no experimental way to fathom the operators' behavior, e.g., their commutation relationships, on~$\H'$.  Theorems~\ref{t:manynearlyindependentqubits} and~\ref{t:swappingmanynearlyindependentqubits} cannot be applied.  This example might not seem so troubling, because we can simply restrict everything to~$\H$; but it becomes more problematic if $\ket \psi$, say, has nonzero but very small support on~$\H'$.  

We would like qubit-separation theorems that have experimentally accessible assumptions.  In particular, the theorems' assumptions should be stated relative to the system's state~$\ket \psi$.  For example, in Theorems~\ref{t:manynearlyindependentqubits} and~\ref{t:swappingmanynearlyindependentqubits} we might loosen the assumption $\norm{[S_i, T_j]} \leq \epsilon$ for $i \neq j$ to be only $\norm{[S_i, T_j] \ket \psi} \leq \epsilon$.  Naturally, the conclusions will have to be correspondingly weakened.  In the above example with $\H \oplus \H'$, if the reflections are far from commuting on~$\H'$ then we cannot hope to find nearby commuting operators, $\norm{S_j' - S_j} \approx 0$; but perhaps we can get $\norm{(S_j' - S_j) \ket \psi} \approx 0$.  

In order to extend our results to experimental systems we proceed in three steps.  

\begin{enumerate}
\item 
First, in \secref{s:statedependentcommutationprotocol} below, we give a protocol that can be used to test if two reflections, $S$ and~$T$, are close to commuting on a state~$\ket \psi$: $[S, T] \ket \psi \approx 0$.  The protocol is very simple: measure $S$, measure $T$, then measure $S$ again.  If $S$ and~$T$ commute on~$\ket \psi$, then the two $S$ measurements will give the same result; and, intuitively, when they do not commute measuring $T$ will disturb the state and make it less likely to get the same $S$ result.  

\item 
However, in \secref{s:statedependentseparationcounterexample}, we show that the condition $[S_i, T_j] \ket \psi \approx 0$ for operators on different qubits is not sufficient to establish that there are nearby independent qubits $X_1', Z_1', \ldots, X_n', Z_n'$.  In fact, we give an explicit construction of a state~$\ket \psi$ and $n$ qubit operators $X_1, Z_1, \ldots, X_n, Z_n$ in $< n^2$ dimensions such that for $i \neq j$, $[S_i, T_j] \ket \psi = 0$ precisely.  Since $n^2 \leq 2^n$ for $n \geq 4$, the dimension of the space is not sufficient to fit $n$ independent qubits.  

(We also show why the basic induction argument used to prove \thmref{t:manynearlyindependentqubits} fails when errors are measured relative to a state~$\ket \psi$.  The errors accumulate too rapidly, leading to an exponential dependence on~$n$, instead of polynomial.)  

\item 
We remedy this problem in \secref{s:statedependentnqubitprotocol} with a more advanced testing protocol.  Intuitively, the improved protocol tests not just pairwise commutation relationships, such as $S_i T_j \ket \psi \approx T_j S_i \ket \psi$, but also higher-order relationships such as $S_i T_j U_k \ket \psi \approx U_k T_j S_i \ket \psi$.  The protocol is still quite simple, though.  Basically, measure all the qubit operators in order (either $X_1, Z_1, X_2, Z_2, \ldots$ or $Z_1, X_1, Z_2, X_2, \ldots$), then go back and measure a random qubit operator ($Z_j$ or $X_j$, respectively), and verify that the measurement result is unchanged.  We show that if the protocol accepts with probability $1 - \epsilon$, then the qubit operators ``simulate'' $n$ independent qubit operators in a certain sense.  In particular, as a corollary, the system's dimension must be at least $(1 - O(n^2 \epsilon)) 2^n$.  

The dimension bound is not fully satisfactory.  A $2^n$ lower bound would be preferable.  However, speculatively, the simulation statement might be strong enough to form the foundation for an analysis that the system can be used as an $n$-qubit quantum computer.  Such an extension is nontrivial, though, and we leave it to future work.  
\end{enumerate}

\medskip %
\subsection{Protocol for testing state-dependent commutation} \label{s:statedependentcommutationprotocol}

We present a protocol that can be used to test whether two reflections approximately commute on a given state.  

\medskip %

\begin{theorem} \label{t:statedependentcommutationprotocol}
Let $S$ and~$T$ be reflections, acting on a state~$\ket \psi$.  Consider the following protocol: 
\begin{enumerate}
\item Measure $S$.  
\item Measure $T$, but ignore the result.    
\item Measure $S$ again.  Accept if the result is unchanged.  
\end{enumerate}
Then the probability of accepting is given by 
\begin{equation*}
\Pr[\mathrm{accept}] = 1 - \tfrac18 \bignorm{[S, T] \ket \psi}{}^2
 \enspace .
\end{equation*}
\end{theorem}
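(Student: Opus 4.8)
The plan is to track the three-step procedure as a composition of projective measurements, reduce the acceptance probability to a single expectation value in $\ket\psi$, and then match that expectation against $\norm{[S,T]\ket\psi}^2$ by an algebraic identity. Since $S$ and $T$ are reflections (so $S = S^\dagger$, $T = T^\dagger$, and $S^2 = T^2 = \Id$), I would begin by introducing the spectral projectors $P_s = \tfrac12(\Id + sS)$ and $Q_t = \tfrac12(\Id + tT)$ for $s,t \in \{+1,-1\}$. The protocol traces out the branch $(s,t,s')$ with unnormalized post-measurement vector $P_{s'} Q_t P_s \ket\psi$, and acceptance is the event $s' = s$. Summing squared norms over the accepting branches and using $P_s^2 = P_s$ with Hermiticity gives
\[
\Pr[\mathrm{accept}] = \sum_{s,t} \norm{P_s Q_t P_s \ket\psi}^2 = \sum_{s,t}\langle\psi| P_s Q_t P_s Q_t P_s|\psi\rangle .
\]

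Next I would collapse the two sums one variable at a time. The pinching identity $\sum_t Q_t A Q_t = \tfrac12(A + TAT)$, immediate from $Q_\pm = \tfrac12(\Id \pm T)$ and $T^2 = \Id$, turns the $t$-sum into $\tfrac12\langle\psi|(P_s + P_s T P_s T P_s)|\psi\rangle$. The $s$-sum of the first piece is simply $\tfrac12$, since $\sum_s P_s = \Id$ and $\ket\psi$ is normalized. For the second piece the key computation is $\sum_s P_s T P_s T P_s = \tfrac12\Id + \tfrac14(STST + TSTS)$, after which the accumulated constants give
\[
\Pr[\mathrm{accept}] = \tfrac34 + \tfrac18\,\langle\psi|(STST + TSTS)|\psi\rangle .
\]

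Finally I would connect this to the commutator. Because $[S,T]$ is anti-Hermitian, $\norm{[S,T]\ket\psi}^2 = -\langle\psi|[S,T]^2|\psi\rangle$, and expanding $[S,T]^2 = STST + TSTS - 2\,\Id$ (again using $S^2 = T^2 = \Id$) yields $\langle\psi|(STST + TSTS)|\psi\rangle = 2 - \norm{[S,T]\ket\psi}^2$. Substituting into the displayed expression collapses $\tfrac34 + \tfrac18\cdot 2$ to $1$ and leaves exactly $1 - \tfrac18\norm{[S,T]\ket\psi}^2$, as claimed.

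The \emph{main obstacle} here is purely bookkeeping rather than conceptual: the reduction $\sum_s P_s T P_s T P_s = \tfrac12\Id + \tfrac14(STST + TSTS)$ requires fully expanding each $P_s T P_s T P_s$ into the six reduced words $\{\Id, S, TST, STST, TSTS, STSTS\}$ and checking that, on summing over $s = \pm 1$, the odd contributions $S$, $TST$, and $STSTS$ cancel while $\Id$, $STST$, $TSTS$ survive with the right coefficients. This is where a stray factor of $\tfrac18$ or a sign error would most easily slip in, so I would verify the cancellation word-by-word before assembling the final constants.
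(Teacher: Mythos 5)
Your proof is correct; every step checks out (the reduction of acceptance to $\sum_{s,t}\bra{\psi}P_sQ_tP_sQ_tP_s\ket{\psi}$, the pinching identity, the word expansion $\sum_s P_sTP_sTP_s = \tfrac12\Id + \tfrac14(STST+TSTS)$, and the final identity $[S,T]^2 = STST + TSTS - 2\Id$). It is, however, organized differently from the paper's proof, which runs the computation in the opposite direction and never expands into words. The paper starts from $\norm{[S,T]\ket\psi}^2$, splits it using $[S,T_b]=\tfrac{(-1)^b}{2}[S,T]$ and the Pythagorean identity $\norm{\ket\phi}^2=\norm{S_0\ket\phi}^2+\norm{S_1\ket\phi}^2$, and then applies the projector relation $S_aS = SS_a = (-1)^aS_a$ to identify each vector $S_a[S,T_b]\ket\psi$ with twice a flipped-outcome branch $S_aT_bS_{\bar a}\ket\psi$ of the protocol; summing squared norms gives $\norm{[S,T]\ket\psi}^2 = 8\sum_{a,b}\norm{S_aT_bS_{\bar a}\ket\psi}^2 = 8(1-\Pr[\mathrm{accept}])$ in one stroke. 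So the paper's key lemma is the branch-by-branch identity $S_a[S,T_b] = 2(-1)^aS_aT_bS_{\bar a}$, which equates each rejection branch directly with a restricted commutator applied to $\ket\psi$ and thereby makes the word-by-word cancellation you flag as the main risk unnecessary. Your route instead passes through the explicit intermediate formula $\Pr[\mathrm{accept}] = \tfrac34 + \tfrac18\bra{\psi}(STST+TSTS)\ket{\psi}$, which costs more expansion work but is mechanically checkable term by term and has the side benefit of isolating the observable $STST+TSTS$ that the protocol actually estimates; the paper's argument is tighter but gives slightly more refined information (a pointwise correspondence between rejection branches and commutator components, not just an aggregate equality).
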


\begin{proof}
For $a, b \in \{0, 1\}$, let $S_a = \tfrac12 (\Id + (-1)^a S)$ and $T_b = \tfrac12 (\Id + (-1)^b T)$.  Then since $[S, T_0] = -[S, T_1] = \tfrac12 [S, T]$, 
\begin{align*}
\norm{[S, T] \ket \Psi}^2
&= 2 \big( \norm{ [S, T_0] \ket \psi }^2 + \norm{ [S, T_1] \ket \psi }^2 \big) \\
&= \sum_{a, b} \bignorm{S_a [S, T_b] \ket \psi}^2 
 \enspace , 
\intertext{where we have used $\norm{\ket \phi}^2 = \norm{S_0 \ket \phi}^2 + \norm{S_1\ket \phi}^2$ for any $\ket \phi$.  Then from $S_a S = S S_a = (-1)^a S_a$, we find $S_a [S, T_b] = S_a [S, T_b] (S_0 + S_1) = 2 (-1)^a S_a T S_{\bar a}$, so }
\norm{[S, T] \ket \psi}^2
&= 8 \sum_{a, b} \bignorm{S_a T_b S_{\bar a} \ket \psi}^2 \\
&= 8 \, (1 - \Pr[\mathrm{accept}])
 \enspace .  \qedhere
\end{align*}
\end{proof}

\subsection{Qubits that commute on a state need not be close to independent qubits}
\label{s:statedependentseparationcounterexample}

In the projection separating argument of \thmref{t:manynearlycommutingprojections}, the key observation was that for projections $P$, $Q$, $R$ with $\norm{[P,Q]}, \norm{[P,R]} \leq \delta$ and $\norm{[Q,R]} \leq \epsilon$, if $Q$ and $R$ are both block-diagonalized with respect to~$P$ then the results still nearly commute:  
\begin{equation*}
\left\Vert \big[ PQP + (\identity-P)Q(\identity-P), PRP + (\identity-P)R(\identity-P) \big] \right\Vert \leq \epsilon + 2 \delta^2
 \enspace .
\end{equation*}
The quadratic dependence on $\delta$ meant that errors did not accumulate badly through the induction.  

Here is a counterexample showing that errors \emph{can} accumulate badly in block diagonalization if we measure errors relative to a state $\ket \psi$, using $\norm{[P,Q] \ket \psi}$.  Define $P$, $Q$, $R$ and $\ket \psi$ as 
\begin{equation}
P = \fastmatrix{
1 & 0 & 0 & \delta \\
0 & 1/2 & 1/2 & 0 \\
0 & 1/2 & 1/2 & 0 \\
\delta & 0 & 0 & 0
}
\qquad
Q = \fastmatrix{
1&0&0&0\\
0&0&0&0\\
0&0&1&0\\
0&0&0&0
}
\qquad
R = \fastmatrix{
1&0&0&0\\
0&0&0&0\\
0&0&1/2&1/2\\
0&0&1/2&1/2
}
\qquad 
\ket \psi = \fastmatrix{1\\0\\0\\0}
 \enspace .
\end{equation}
Then $P$, $Q$ and $R$ are projections (up to second order in $\delta$ for~$P$), with $\norm{[P,Q] \ket \psi}, \norm{[P,R] \ket \psi} = O(\delta)$, $[Q,R] \ket \psi = 0$, and yet 
\begin{equation*}
\left\Vert \big[ PQP + (\identity-P)Q(\identity-P), PRP + (\identity-P)R(\identity-P) \big] \ket \psi \right\Vert = \Omega(\delta)
 \enspace .
\end{equation*}
The idea is that $Q$ and $R$ commute on the first two dimensions, and are far from commuting on the last two dimensions; but this property is broken by the block diagonalization.  

This example suggests that in a simple induction argument, starting with projections $P_1, \ldots, P_n$ having pairwise commutators $\norm{[P_i, P_j] \ket \psi} \sim \epsilon$, after block-diagonalizing with respect to $P_1$, the errors can grow to $\sim 2 \epsilon$, then to $\sim 4 \epsilon$ after block-diagonalizing with respect to the new $P_2$, and so on; the errors potentially grow exponentially.  

In fact, it is not only our \emph{proof} of Theorems~\ref{t:manynearlycommutingprojections} and~\ref{t:manynearlyindependentqubits} that fails when errors are measured relative to a state~$\ket \psi$.  The theorems themselves fail, as shown by the following construction. 

\begin{lemma} \label{t:kcommute}
For any $n$ and $k \in [n]$, there exists a space $\H$ of dimension at most $1 + \sum_{j=0}^k {n \choose j}$, a vector $\ket \psi \in \H$ and $n$ qubits $X_j, Z_j$ such that 
\begin{equation*}
S^{(1)}_{j_1} \cdots S^{(k)}_{j_k} \ket \psi = S_{j_{\sigma(1)}}^{\sigma(1)} \cdots S_{j_{\sigma(k)}}^{\sigma(k)} \ket \psi
\end{equation*}
for all distinct indices $j_1, \ldots, j_k \in [n]$, $S^{(1)}, \ldots, S^{(k)} \in \{X, Z\}$, and permutations $\sigma$ of $[k]$.  
\end{lemma}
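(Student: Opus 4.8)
The plan is to realize the $X_j,Z_j$ as honest Pauli operators \emph{truncated} to a Hamming ball around a product state, so that the order-$k$ identity is inherited from the exact commutation of genuine Paulis. Concretely, I would take $\H$ to have orthonormal basis $\{\ket A : A\subseteq[n],\ |A|\le k\}$ together with one extra vector $\ket\perp$, and set $\ket\psi=\ket\emptyset$. Writing $e_A$ for the computational string that is $1$ exactly on $A$, I define $Z_j\ket A=(-1)^{[j\in A]}\ket A$ (and $Z_j\ket\perp=\ket\perp$), and in the ``interior'' $X_j\ket A=\ket{A\triangle\{j\}}$ whenever $|A\triangle\{j\}|\le k$. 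These are precisely the genuine $n$-qubit Paulis acting on the weight-$\le k$ slice of $(\C^2)^{\otimes n}$, so the only place the definition must be patched is where $X_j$ would overflow the ball, namely on the ``creation'' of a $(k{+}1)$-st excitation.

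The first step, and the heart of the matter, is the observation that checking the order-$k$ identity never requires knowing how the operators act on weight-$k$ states. Reading $S^{(1)}_{j_1}\cdots S^{(k)}_{j_k}\ket\psi$ from the right, each operator is applied to a vector produced from $\ket\emptyset$ by at most $k-1$ earlier operators, hence of Hamming weight at most $k-1$; weight $k$ is reached only by the \emph{final} operator, and only when all $k$ of them are creation moves. On weight-$\le k-1$ inputs the truncated operators agree identically with the genuine Paulis, whose distinct-qubit factors commute exactly, and each $Z_{j_i}$ meets its own bit unset (since indices are distinct) and contributes $+1$. Thus $S^{(1)}_{j_1}\cdots S^{(k)}_{j_k}\ket\psi=\ket{e_A}$ with $A=\{\,j_i : S^{(i)}=X\,\}$, which depends only on the set of $X$-labelled qubits and is therefore invariant under every permutation $\sigma$. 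This proves the stated identity outright, and the dimension is $|\{A:|A|\le k\}|+1=1+\sum_{j=0}^k\binom nj$, as required.

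The remaining, and genuinely delicate, step is to complete the definition so that each $(X_j,Z_j)$ is an honest qubit, i.e.\ reflections with $\{X_j,Z_j\}=0$. Here $Z_j$ is already a reflection on all of $\H$. The pairing $\ket A\leftrightarrow\ket{A\cup\{j\}}$ makes $X_j$ an involution everywhere except on the overflow subspace $V_j=\Span\{\ket A:|A|=k,\ j\notin A\}\oplus\C\ket\perp$, on which $X_j$ is still free; on the complement of $V_j$ one checks that $X_j,Z_j$ act as $\sigma^x,\sigma^z$ on each two-dimensional block $\{\ket A,\ket{A\cup\{j\}}\}$ and hence anticommute. So the whole problem reduces to choosing anticommuting reflections on each $V_j$. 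I expect this to be the main obstacle: anticommuting reflections force $Z_j$ traceless, which demands $\dim V_j=\binom{n-1}{k}+1$ even, and, worse, the $V_j$ overlap (a weight-$k$ set $B$ lies in $V_j$ for every $j\notin B$), so the choices must be made consistently across all $j$ simultaneously. I would attack this by using the single register $\ket\perp$ as a global balancing dimension, and, if a residual parity obstruction remains, by exploiting the ``at most'' in the dimension bound to pass to a nearby even dimension; note that, since the order-$k$ identity above was established \emph{without any appeal to anticommutation}, one may in any case complete $X_j$ to an arbitrary reflection on $V_j$ whenever the ambient notion of ``qubit'' requires only $X_j^2=Z_j^2=\Id$.
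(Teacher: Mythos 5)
Your construction is exactly the paper's construction (truncate the Pauli action to the Hamming ball of radius $k$, with one spare dimension), and your second paragraph---that a length-$k$ product applied to $\ket\emptyset$ only ever feeds its operators vectors of weight at most $k-1$, where the truncated operators coincide with genuine Paulis, so both the permuted and unpermuted products equal $\ket{e_A}$ with $A$ the set of $X$-labelled indices---is a correct and complete proof of the displayed identity, valid for \emph{any} completion of the operators on the overflow spaces $V_j$. The gap is that you never carry out that completion, and the lemma requires it: in this paper a ``qubit'' means a pair of reflections satisfying $\{X_j, Z_j\} = 0$, so your closing fallback (settling for $X_j^2 = Z_j^2 = \Id$ only) does not prove the statement as written.

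Moreover, of the two obstacles you raise, one is illusory and the other has a one-line fix, so the step you call ``genuinely delicate'' is not. The consistency worry is empty: the lemma imposes no algebraic relation between operators belonging to \emph{different} qubits (that is its whole point---these qubits are not independent), so the anticommuting reflections on $V_i$ and $V_j$ are chosen independently for each index; in the paper's $n=4$, $k=2$ example both $X_1$ and $X_2$ pair $\ket{0011}$ with the spare vector $\ket d$, with no interaction whatsoever. As for parity: from Pascal's rule, $\sum_{j\le k}\binom{n}{j} = \binom{n-1}{k} + 2\sum_{j\le k-1}\binom{n-1}{j}$, so the overflow size $\binom{n-1}{k}$ is odd exactly when the ball has odd size; include $\ket\perp$ in $\H$ (and hence in every $V_j$) precisely in that case, and omit it otherwise. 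Then every $V_j$ has even dimension, and you finish as the paper does: pair the basis vectors of $V_j$ arbitrarily, let $X_j$ swap each pair, and \emph{redefine} $Z_j$ to be $+1$ and $-1$ on the two members of each pair (your diagonal $Z_j$, which is $+\Id$ on $V_j$, cannot anticommute with anything invertible there). Both operators preserve $V_j$ and its complement, they anticommute on each block, and the identity from your second paragraph is untouched since it never probes $V_j$; the dimension stays $\le 1+\sum_{j\le k}\binom{n}{j}$. Note finally that your default of always keeping $\ket\perp$ cannot be repaired by adding dimensions: when $\binom{n-1}{k}$ is even, $\dim\H$ would be odd, and no pair of anticommuting reflections exists in odd dimension, since $Z_j X_j Z_j = -X_j$ forces $\Tr X_j = 0$ while the eigenvalues of $X_j$ are $\pm 1$.
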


In particular, for $k = 2$, the lemma places $n$ qubits in $O(n^2)$ dimensions---for example, four qubits in~$12$ dimensions---such that $[S_i, T_j] \ket \psi = 0$ for all $i \neq j$ and $S,T\in\{X,Z\}$.    

\begin{proof}
Let us begin by explaining the $n = 4$, $k = 2$ special case of the construction.  Define $\H$ to have orthonormal basis $\ket{0000}, \ket{1000}, \ldots, \ket{0001}, \ket{1100}, \ldots, \ket{0011}, \ket d$, i.e., all $n$-bit strings of Hamming weight at most~$k$, together with an additional vector~$\ket d$.  Let~$\ket \psi = \ket{0000}$, and consider the following operators for the first qubit:
\vspace{.4cm}
\begin{gather*}
X_1 = \hspace{.5cm}
\left(
\makebox(95,43){\hspace{-.7cm}\raisebox{-1in}{$
\begin{smallmatrix}
&\rotatebox{85}{\tiny 0000}&
\rotatebox{85}{\tiny 1000}&
\rotatebox{85}{\tiny 0100}&
\rotatebox{85}{\tiny 0010}&
\rotatebox{85}{\tiny 0001}&
\rotatebox{85}{\tiny 1100}&
\rotatebox{85}{\tiny 1010}&
\rotatebox{85}{\tiny 1001}&
\rotatebox{85}{\tiny 0110}&
\rotatebox{85}{\tiny 0101}&
\rotatebox{85}{\tiny 0011}&
\rotatebox{85}{\tiny $d$} \\
\rotatebox{0}{\tiny 0000\;\;\;}&0&1& & & & & & & & & &  \\
\rotatebox{0}{\tiny 1000\;\;\;}&1&0& & & & & & & & & &  \\
\rotatebox{0}{\tiny 0100\;\;\;}& & &0& & &1& & & & & &  \\
\rotatebox{0}{\tiny 0010\;\;\;}& & & &0& & &1& & & & &  \\
\rotatebox{0}{\tiny 0001\;\;\;}& & & & &0& & &1& & & &  \\
\rotatebox{0}{\tiny 1100\;\;\;}& & &1& & &0& & & & & &  \\
\rotatebox{0}{\tiny 1010\;\;\;}& & & &1& & &0& & & & &  \\
\rotatebox{0}{\tiny 1001\;\;\;}& & & & &1& & &0& & & &  \\
\rotatebox{0}{\tiny 0110\;\;\;}& & & & & & & & &0&1& &  \\
\rotatebox{0}{\tiny 0101\;\;\;}& & & & & & & & &1&0& &  \\
\rotatebox{0}{\tiny 0011\;\;\;}& & & & & & & & & & &0&1 \\
\rotatebox{0}{\tiny $d$}& & & & & & & & & & &1&0 
\end{smallmatrix}
$}}
\right)
\place{\Huge 0}{-45mu}{10pt}
\place{\Huge 0}{-110mu}{-35pt}
\qquad\qquad
\def\minusone{\hspace{-.4cm}\text{--}1\hspace{-.2cm}}
Z_1 = \hspace{.5cm}
\left(
\makebox(95,43){\hspace{-.7cm}\raisebox{-1in}{$
\begin{smallmatrix}
&\rotatebox{85}{\tiny 0000}&
\rotatebox{85}{\tiny 1000}&
\rotatebox{85}{\tiny 0100}&
\rotatebox{85}{\tiny 0010}&
\rotatebox{85}{\tiny 0001}&
\rotatebox{85}{\tiny 1100}&
\rotatebox{85}{\tiny 1010}&
\rotatebox{85}{\tiny 1001}&
\rotatebox{85}{\tiny 0110}&
\rotatebox{85}{\tiny 0101}&
\rotatebox{85}{\tiny 0011}&
\rotatebox{85}{\tiny $d$} \\
\rotatebox{0}{\tiny 0000\;\;\;}&1& & & & & & & & & & &  \\
\rotatebox{0}{\tiny 1000\;\;\;}& &\minusone& & & & & & & & & &  \\
\rotatebox{0}{\tiny 0100\;\;\;}& & &1& & & & & & & & &  \\
\rotatebox{0}{\tiny 0010\;\;\;}& & & &1& & & & & & & &  \\
\rotatebox{0}{\tiny 0001\;\;\;}& & & & &1& & & & & & &  \\
\rotatebox{0}{\tiny 1100\;\;\;}& & & & & &\minusone& & & & & &  \\
\rotatebox{0}{\tiny 1010\;\;\;}& & & & & & &\minusone& & & & &  \\
\rotatebox{0}{\tiny 1001\;\;\;}& & & & & & & &\minusone& & & &  \\
\rotatebox{0}{\tiny 0110\;\;\;}& & & & & & & & &1& & &  \\
\rotatebox{0}{\tiny 0101\;\;\;}& & & & & & & & & &\minusone& &  \\
\rotatebox{0}{\tiny 0011\;\;\;}& & & & & & & & & & &1&  \\
\rotatebox{0}{\tiny $d$}& & & & & & & & & & & &\minusone 
\end{smallmatrix}
$}}
\right)
\place{\Huge 0}{-55mu}{10pt}
\place{\Huge 0}{-130mu}{-25pt}
 \enspace .
\end{gather*}
Unspecified matrix entries are $0$.  $X_2$ and $Z_2$ can be obtained from $X_1$ and $Z_1$ by switching the first and second bits in each basis element, leaving $\ket d$ alone; and similarly for $X_3, Z_3$ and $X_4, Z_4$.  Then $P_i^2 = \identity$, $\{X_i, Z_i\} = 0$ and $[P_i, Q_j] \ket \psi = 0$, for $i \neq j$ and $P, Q \in \{X, Z\}$.  

The idea behind this construction is that $X_j, Z_j$ act largely as the Pauli operators $\sigma^x_j, \sigma^z_j$.  However, we have truncated the standard basis $\ket{0000}, \ldots, \ket{1111}$ for $(\C^2)^{\otimes 4}$ to include only strings of Hamming weight $\leq 2$.  Since applying $\sigma^x_1$ to $\ket{0110}, \ket{0101}$ and $\ket{0011}$ would give strings of Hamming weight~$3$, we instead pair these dimensions up arbitrarily for~$X_1$, and define $Z_1$ on them to make it anti-commute with $X_1$.  The extra dimension $\ket d$ is needed to make the total dimension even.  

It is straightforward to generalize the example: by truncating strings at Hamming weight~$k$ the same construction places $n$ qubits in $\sum_{j=0}^k \big(\begin{smallmatrix}n\\j\end{smallmatrix}\big)$ dimensions (or one more if this dimension is odd), such that any combination of up to $k$ qubit operators commute on $\ket \psi = \ket{0^n}$, e.g., if $k \geq 3$, $X_1 X_2 X_3 \ket \psi = X_3 X_2 X_1 \ket \psi = \ket{1110^{n-3}}$.  
\end{proof}

\subsection{Protocol to test for $n$ independent qubits} \label{s:statedependentnqubitprotocol}

The problem with the protocol in \thmref{t:statedependentcommutationprotocol} is that it only tests commutation between pairs of operators on the state $\ket \psi$: $[S, T] \ket \psi \approx 0$.  \lemref{t:kcommute} shows that $n$ qubits in only $O(n^2)$ dimensions can pass this test on every pair.  The lemma furthermore suggests that any test involving qubit operator sequences of length $o(n)$ can be satisfied in dimension $2^{o(n)}$.  Therefore, we need a protocol that has at least~$n$ steps.  

\figref{f:statedependentnqubitprotocol} gives our testing protocol.  We argue that if the protocol accepts with high probability, then the $n$ overlapping qubits $X_j, Z_j$ are nearly equivalent to $n$ independent qubits $\hat X_j, \hat Z_j$ in an enlarged space $\H' = \H \otimes (\C^2)^{\otimes 2 n}$.  

\begin{figure}
{ \noindent \hrulefill \\
\centering \textbf{Protocol to test for $n$ independent qubits} \\ } \smallskip

Let $\ket \psi \in \H$ be a state.  Let $X_1, Z_1, \ldots, X_n, Z_n$ be qubit operators on~$\H$, i.e., reflections satisfying $\{X_j, Z_j\} = 0$ for all~$j$.  
\begin{enumerate}
\item With equal probabilities $1/2$, measure the reflections in order, either $X_1, Z_1, X_2, Z_2, \ldots$, or $Z_1, X_1, Z_2, X_2, \ldots$.  
\item Pick a uniformly random index $j \in [n]$.  If $Z$ went second in step $(1)$, then measure~$Z_j$; and if $X$ went second, then measure $X_j$.  Accept if the result is unchanged from the operator's previous measurement.  Otherwise reject.  
\end{enumerate}
\vspace{-1\baselineskip}
\hrulefill
\caption{Protocol to test for $n$ independent qubits.} \label{f:statedependentnqubitprotocol}
\end{figure}

\begin{theorem} \label{t:statedependentnqubitprotocol}
Consider the protocol of \figref{f:statedependentnqubitprotocol}.  Assume the probability it accepts is at least $1 - \epsilon$.  

Let $\ket{\EPRstate} = \tfrac{1}{\sqrt 2}(\ket{00} + \ket{11})$.  Let $\ket{\Psi_0} = \ket \psi \otimes \ket{\EPRstate}^{\otimes n} \in \H' = \H \otimes (\C^2)^{\otimes 2 n}$, and let $\ket \Psi$ be obtained from $\ket{\Psi_0}$ by swapping each qubit $X_j, Z_j$ with the first half of one of the EPR states, in order $j = 1, \ldots, n$.  (See \figref{f:addneprstates}.)  Then there exist $n$ independent qubits, given by $\hat X_1, \hat Z_1, \ldots, \hat X_n, \hat Z_n$, on $\H'$ such that for any sequence of qubit operators $U_{j_1}, \ldots, U_{j_k}$, where $U_j$ acts on the $X_j, Z_j$ qubit and $\norm{U_j} \leq 1$, 
\begin{equation} \label{e:statedependentnqubitprotocol}
\bignorm{ U_{j_1} \cdots U_{j_k} \ket \Psi - \hat U_{j_1} \cdots \hat U_{j_k} \ket \Psi } = O(k n \sqrt \epsilon)
 \enspace .
\end{equation}
Here $\hat U_j$ is the same operator as $U_j$, except acting on the $\hat X_j, \hat Z_j$ qubit.  That is, if $U_j$ has Pauli expansion $U_j = \alpha_j \identity + \beta_j X_j + \gamma_j Z_j + \delta_j (i X_j Z_j)$ for scalars $\alpha_j, \beta_j, \gamma_j, \delta_j$, then $\hat U_j = \alpha_j \identity + \beta_j \hat X_j + \gamma_j \hat Z_j + \delta_j (i \hat X_j \hat Z_j)$.  
\end{theorem}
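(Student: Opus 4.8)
The plan is to exhibit the independent qubits $\hat X_j, \hat Z_j$ as genuine Pauli operators living on the second halves of the $n$ EPR pairs---the tensor factors of $(\C^2)^{\otimes 2n}$ that the swaps never touch---so that they commute \emph{exactly} with every physical operator $X_i, Z_i$ and, for $j' \neq j$, with $\hat U_{j'}$. In the ideal case, where the $X_i, Z_i$ already generate $n$ independent qubits, the swaps factor and the EPR relation $(M \otimes \Id)\ket{\EPRstate} = (\Id \otimes M^{T})\ket{\EPRstate}$ makes $U_{j_1}\cdots U_{j_k}\ket\Psi = \hat U_{j_1}\cdots\hat U_{j_k}\ket\Psi$ an identity (with the transpose in this relation fixing the sign and ordering conventions that enter the definition of $\hat U_j$); the whole content of the theorem is the robust version. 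I would reduce \eqnref{e:statedependentnqubitprotocol} to two ingredients: a single-operator bound $\bignorm{(U_j - \hat U_j)\ket\Psi} = O(n\sqrt\eps)$, and a telescoping argument that upgrades it to the length-$k$ product at the cost of a factor $k$.

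First I would convert the acceptance probability into state-dependent commutation bounds, generalizing \thmref{t:statedependentcommutationprotocol}. Writing each reflection's measurement as the projector pair $S_{\pm} = \tfrac12(\Id \pm S)$ and expanding over measurement branches, the rejection probability for a fixed ordering and a fixed recheck index $j$ becomes a sum of squared norms of the rechecked operator commuted past the intervening measurement projectors, evaluated on the collapsed branch states---precisely the quantity that \thmref{t:statedependentcommutationprotocol} computes for a single pair. Since the protocol chooses $j \in [n]$ uniformly and the ordering with probability $\tfrac12$, an acceptance probability $\ge 1 - \eps$ forces $\sum_j (\mathrm{defect}_j)^2 = O(n\eps)$, where $\mathrm{defect}_j$ measures how much the operators measured after step $j$ disturb the operator rechecked at step $j$. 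By Cauchy--Schwarz the total sequential defect is then $\sum_j \mathrm{defect}_j = O(n\sqrt\eps)$.

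Next I would prove the single-operator bound by analyzing the sequential swap inductively. The swaps are applied in order $1, \ldots, n$, so the worry is that swaps $j+1, \ldots, n$ disturb the qubit already transferred at step $j$; I would show that this disturbance is controlled by exactly the sequential commutation defects bounded above, yielding $\bignorm{(U_j - \hat U_j)\ket\Psi} = O(n\sqrt\eps)$. \textbf{This step is the main obstacle.} The danger is error amplification: the counterexample of \secref{s:statedependentseparationcounterexample} shows that when errors are measured on a state, swapping out one qubit after another can \emph{double} the error at each stage, so a naive induction would give an exponential-in-$n$ bound. The reason for testing the full length-$n$ ordered sequence---rather than only pairs, which \lemref{t:kcommute} proves is insufficient---is that the protocol certifies that each operator commutes with the \emph{entire} block of later operators at once, so the per-qubit contributions add rather than compound, keeping the total linear in $n$.

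Finally I would run a hybrid argument over the $k$ factors, substituting $U_{j_i} \mapsto \hat U_{j_i}$ one at a time from the inside out. After a factor is replaced, the freed ideal operator commutes exactly with all remaining physical operators and can be moved aside at no cost, so the next physical factor again acts directly on $\ket\Psi$ and its replacement costs only the single-operator error $O(n\sqrt\eps)$; no physical-physical commutation is ever needed. Summing the $k$ replacement errors gives the claimed $O(kn\sqrt\eps)$, with the transpose convention in the EPR relation accounting for the alignment of the operator ordering on the two sides of \eqnref{e:statedependentnqubitprotocol}.
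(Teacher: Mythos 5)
Your high-level architecture (a single-operator bound plus a hybrid over the $k$ factors) parallels the paper's, but your choice of the independent qubits is not just a convention issue---it is wrong, and it fails already in the ideal case you invoke to motivate it. If $\hat X_j = \sigma^x_{j''}$ and $\hat Z_j = \sigma^z_{j''}$ live on the untouched halves of the EPR pairs, the theorem's prescription forces $\hat U_j = \alpha_j \Id + \beta_j \sigma^x_{j''} + \gamma_j \sigma^z_{j''} + \delta_j \sigma^y_{j''}$, i.e.\ $\hat U_j = U_{j''}$ with the \emph{same} coefficients. But the EPR identity you cite produces a transpose: what the protocol certifies (and what holds exactly in the ideal case) is $U_j \ket\Psi \approx U_{j''}^T \ket\Psi$, and transposition flips the sign of the $\sigma^y$ component. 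Concretely, for $U_j = i X_j Z_j$ one gets $U_j \ket\Psi \approx -\sigma^y_{j''} \ket\Psi$ while $\hat U_j \ket\Psi = (i \hat X_j \hat Z_j) \ket\Psi = +\sigma^y_{j''} \ket\Psi$, so $\bignorm{(U_j - \hat U_j)\ket\Psi} \approx 2$, not $O(n\sqrt\epsilon)$. No choice of qubit operators on the $j''$ registers can repair this: the coefficient-wise map $U_j \mapsto \hat U_j$ is an algebra automorphism of the $2\times 2$ matrix algebra, whereas transposition is an anti-automorphism, and the two can never coincide. This is why the paper's hats are \emph{not} operators on the untouched halves: it sets $\hat P_i = \S_{n,1} \sigma^P_{i'} \S_{1,n}$, the conjugates by the actual swap unitaries of Paulis on the swapped-in halves $i'$, supported on $\H$ and the primed registers. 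These satisfy $\hat U_j \ket\Psi = U_{j''}^T \ket\Psi$ \emph{exactly}, so the transpose appears on both sides of the comparison with $U_j \ket\Psi \approx U_{j''}^T \ket\Psi$ and cancels.

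Second, the step you yourself flag as the main obstacle---the single-operator bound---is exactly the place where you have no argument, only the hope that "per-qubit contributions add rather than compound." The paper's missing idea here is not a sequential-defect induction at all, but a reformulation of the protocol: measuring $X_i$ then $Z_i$ (or $Z_i$ then $X_i$) and discarding the second result depolarizes that qubit, which equals swapping it with half a fresh EPR pair, and the retained first-round outcome for qubit $j$ can be read off at the end as a measurement of $\sigma^x_{j''}$ or $\sigma^z_{j''}$. Hence acceptance with probability $1-\epsilon$ \emph{directly} yields $\bignorm{X_j \otimes \sigma^x_{j''} \ket\Psi - \ket\Psi},\, \bignorm{Z_j \otimes \sigma^z_{j''} \ket\Psi - \ket\Psi} \leq 2\sqrt{2\epsilon_j}$ on the post-swap state $\ket\Psi$, with $\sum_j \epsilon_j = n\epsilon$; there is no induction over swaps and so nothing that could compound. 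Finally, note that with the correct hats your hybrid needs rerouting: $\hat U_i$ acts nontrivially on $\H$, so it does \emph{not} commute exactly with the physical $U_j$'s (exact commutation held only for your flawed choice). The paper's telescoping therefore first converts all hatted factors to $U_{j''}^T$'s (exactly), and then converts those back to physical factors one at a time (approximately), so that every exchange of order is between an operator on the doubly-primed registers and one off them, where commutation is exact. With these two repairs your outline becomes the paper's proof, which in fact gives the stronger bound $O(k\sqrt{n\epsilon})$.
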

 
Observe that if the $X_j, Z_j$ qubits are independent of each other, then the measurements on different qubits commute, and so the protocol accepts with probability one.  In that case, there is nothing to show.  In general, however, measuring qubits $j+1, \ldots, n$ can disturb the last measurement on qubit~$j$.  

The EPR state appears in the conclusion of \thmref{t:statedependentnqubitprotocol} even though it is not used in the testing protocol.  Essentially this is because of the following two properties of $\ket{\EPRstate}$: 
\begin{enumerate}
\item Depolarizing a qubit, i.e., replacing it with the maximally mixed state, is equivalent to swapping it with the first qubit of a fresh EPR state then tracing out the EPR state's registers.  
\item For any $2 \times 2$ matrix $M$, $(I \otimes M) \ket{\EPRstate} = (M^T \otimes I) \ket{\EPRstate}$.  
\end{enumerate}
The second property is key in our analysis for algebraically manipulating operators to show approximate commutation.  To see how, consider for example a state $\ket \phi$ that involves four qubits, labeled $1, 2, 1', 2'$, where the $j'$ qubits do not overlap with any others.  If $\ket \phi$ is close to an EPR state on qubits $(1,1')$ and $(2,2')$, then operators on qubits~$1$ and~$2$ necessarily nearly commute on~$\ket \phi$: 
\begin{align*}
U_1 V_2 \ket \phi
&\approx U_1 V_{2'}^T \ket \phi 
= V_{2'}^T U_1 \ket \phi \\
&\approx V_{2'}^T U_{1'}^T \ket \phi 
= U_{1'}^T V_{2'}^T \ket \phi \\
&\approx U_{1'}^T V_2 \ket \phi 
= V_2 U_{1'}^T \ket \phi \\
&\approx V_2 U_1 \ket \phi
 \enspace .
\end{align*}
The trick is to pull operators from one side of an approximate EPR state to the other, commute them there, then pull them back.  

\begin{figure}
\centering
\includegraphics[scale=.07]{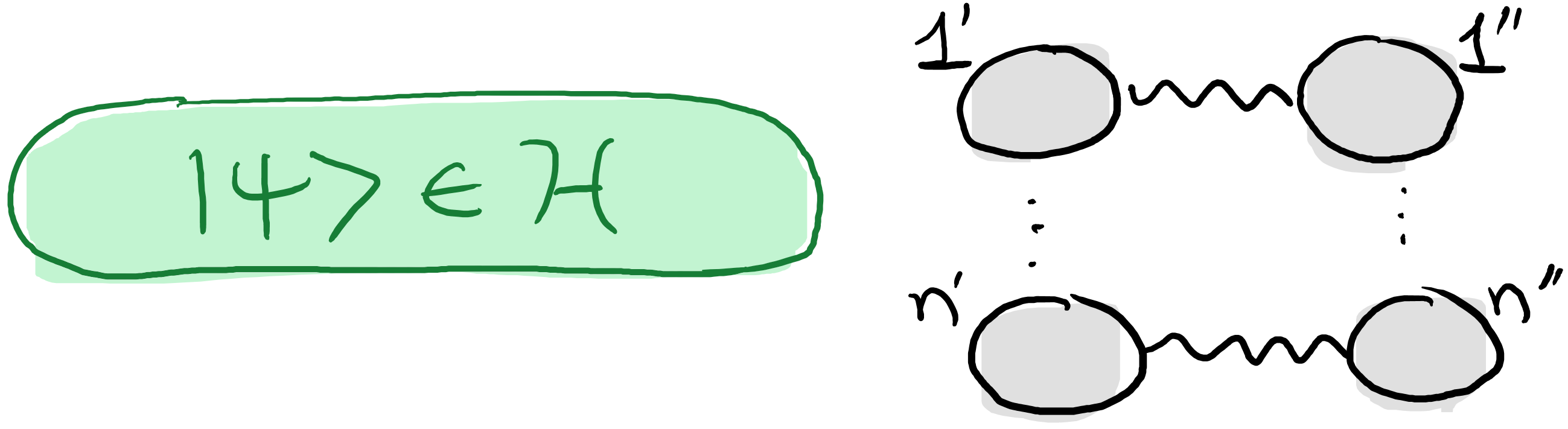}
\caption{The state $\ket{\Psi_0}$ is given by $\ket \psi \otimes \ket{\EPRstate}^{\otimes n}$, where the EPR states are on qubits $1'$ and $1''$, $2'$ and $2''$, and so on.  To get $\ket \Psi$, swap qubit~$j'$ with the qubit in~$\H$ defined by $X_j, Z_j$, for $j = 1, \ldots, n$.  Observe that starting from $\ket \psi$ and depolarizing the $X_j, Z_j$ qubits, for $j = 1, \ldots, n$, is equivalent to tracing out all $j'$ and~$j''$ qubits from $\ketbra \Psi \Psi$.} \label{f:addneprstates}
\end{figure}

\begin{proof}[Proof of \thmref{t:statedependentnqubitprotocol}]
To analyze the protocol, we relate it to a separate protocol that is based on swapping qubits with halves of EPR states.  Observe that measuring either $X_i$ then $Z_i$, or $Z_i$ then $X_i$, and discarding the second measurement result, is equivalent to depolarizing the qubit.  Depolarizing a qubit is equivalent to swapping it with one half of $\ket{\EPRstate}$ and tracing out the original EPR state's registers.  Therefore, the protocol of \figref{f:statedependentnqubitprotocol} accepts with the same probability as the following protocol: 
\begin{enumerate}
\item Append to the system $n$ EPR states, on qubits labeled $1', 1'', \ldots, n', n''$.  Thus the system is in the state $\ket{\Psi_0} = \ket \psi \otimes \ket{\EPRstate}^{\otimes n} \in \H \otimes (\C^2_{1'} \otimes \C^2_{1''}) \otimes \cdots \otimes (\C^2_{n'} \otimes \C^2_{n''})$; see \figref{f:addneprstates}.  
\item For $i$ from $1$ up to $n$, swap the qubit defined by $X_i, Z_i$ with the new qubit~$i'$.  
\item Pick a uniformly random index~$j \in [n]$.  With equal probabilities $1/2$, measure either $X_j$ and~$\sigma^x_{j''}$, or $Z_j$ and $\sigma^z_{j''}$.  Accept if the measurement results are the same, both $+1$ or both~$-1$.  
\end{enumerate}
Indeed, for $\alpha \in \{x, z\}$, measuring $\sigma^\alpha_{j''}$ at the end of the protocol is equivalent to measuring $\sigma^\alpha_{j'}$ at the start, which is also equivalent to measuring just after swapping with the $X_j, Z_j$ qubit.  

If the protocol accepts with probability $1 - \epsilon$, then for probabilities $\epsilon_j$ satisfying $\epsilon = \tfrac{1}{n} \sum_j \epsilon_j$, we have $\min\!\big\{ \norm{\tfrac12 (\identity + X_j \otimes \sigma^x_{j''}) \ket \Psi}{}^2, \norm{\tfrac12 (\identity + Z_j \otimes \sigma^z_{j''}) \ket \Psi}{}^2 \big\} \geq 1 - 2  \epsilon_j$, where $\ket \Psi$ is the state after the swap gates in step~(2).  In particular, 
\begin{equation*}
\max\Big\{ \bignorm{X_j \otimes \sigma^x_{j''} \ket \Psi - \ket \Psi}, \bignorm{Z_j \otimes \sigma^z_{j''} \ket \Psi - \ket \Psi} \Big\} \leq 2 \sqrt{2 \epsilon_j}
 \enspace .
\end{equation*}
This implies that for any one-qubit operator $U_j$ acting on the $X_j, Z_j$ qubit, $U_j \ket \Psi \approx U_{j''}^T \ket \Psi$, where $U_{j''}$ is the same operator, but acting on the $j''$ qubit.  More precisely, if $U_j = \alpha_j \identity + \beta_j X_j + \gamma_j Z_j + \delta_j (i X_j Z_j)$ for complex scalars $\alpha_j, \beta_j, \gamma_j, \delta_j$, then $U_{j''}^T = \alpha_j \identity + \beta_j \sigma^x_{j''} + \gamma_j \sigma^z_{j''} - \delta_j \sigma^y_{j''}$; and, since $\max\{ \abs{\alpha_j}, \abs{\beta_j}, \abs{\gamma_j}, \abs{\delta_j} \} \leq \norm{U_j}$, 
\begin{align*}
\bignorm{(U_j - U_{j''}^T) \ket \Psi} 
&\leq (\abs{\beta_j} + \abs{\gamma_j} + 2 \abs{\delta_j}) \cdot 2 \sqrt{2 \epsilon_j} \\
&\leq 4 \norm{U_j} \cdot 2 \sqrt{2  \epsilon_j}
 \enspace .
\end{align*}
For each~$i$, let $\S_i$ be the operator on that swaps the $X_i, Z_i$ qubit with the new qubit~$i'$: $\S_i = \frac12 \big( \identity + X_i \otimes \sigma^x_{i'} + Z_i \otimes \sigma^z_{i'} + i (X_i Z_i) \otimes \sigma^y_{i'} \big)$.  
For $i \leq j$, let $\S_{i,j} = \S_i \S_{i+1} \ldots \S_j$ and $\S_{j,i} = \S_j \S_{j-1} \ldots \S_i$.  Thus $\ket \Psi = \S_{n,1} \ket{\Psi_0}$.  

Let $\hat P_i = \S_{n,i+1} P_i \S_{i+1,n} = \S_{n,i} \sigma^P_{i'} \S_{i,n} = \S_{n,1} \sigma^P_{i'} \S_{1,n}$.  As $[\sigma^P_{i'}, \sigma^Q_{j'}] = 0$ for $i \neq j$ and $P, Q \in \{X, Z\}$, so too $[\hat P_i, \hat Q_j] = 0$.  

Observe that 
\begin{equation} \label{e:switch1}
\hat U_j \ket \Psi = U_{j''}^T \ket \Psi
 \enspace ,
\end{equation}
 since 
\begin{align*}
\hat U_j \S_{n,1} \ket{\Psi_0} 
&= (\S_{n,1} U_{j'} \S_{1,n}) \S_{n,1} \ket{\Psi_0} \\
&= \S_{n,1} U_{j'} \ket{\Psi_0} \\
&= \S_{n,1} U_{j''}^T \ket{\Psi_0}
 \enspace ,
\end{align*}
where the last equality is because $\ket{\Psi_0}$ includes an EPR state between qubits $j'$ and~$j''$.  It follows that for any unitary $U$ acting only on the $X_j, Z_j$ qubit,
\begin{equation} \label{e:normbound0}
\bignorm{(U_j - \hat U_j) \ket \Psi} \leq 8 \sqrt{2 \epsilon_j}
 \enspace .
\end{equation}
Now consider a sequence of operators $U_{j_1}, \ldots, U_{j_k}$, where $U_j$ acts on the $X_j, Z_j$ qubit and $\norm{U_j} \leq 1$.  Then iterating $\hat U_j \ket \Psi = U_{j''}^T \ket \Psi$ gives 
\begin{align*}
\hat U_{j_1} \cdots \hat U_{j_k} \ket \Psi 
&= \hat U_{j_1} \cdots \hat U_{j_{k-1}} U_{j_k''}^T \ket \Psi \\
&= U_{j_k''}^T \hat U_{j_1} \cdots \hat U_{j_{k-1}} \ket \Psi \\
&= \cdots \\
&= U_{j_k''}^T \cdots U_{j_1''}^T \ket \Psi 
 \enspace .
\intertext{To continue, iterate on $U_j \ket \Psi \approx U_{j''}^T \ket \Psi$: }
&\approx U_{j_1} U_{j_k''}^T \cdots U_{j_2''}^T \ket \Psi \\
&\approx \cdots \\
&\approx U_{j_1} \cdots U_{j_k} \ket \Psi
 \enspace .
\end{align*}
The overall error satisfies 
\begin{equation*}
\bignorm{ U_{j_1} \cdots U_{j_k} \ket \Psi - \hat U_{j_1} \cdots \hat U_{j_k} \ket \Psi } 
\leq k \cdot 4 \max \norm{U_{j_\ell}} \cdot 2 \sqrt{2 \epsilon_{j_\ell}}
= O(k \sqrt{n \epsilon})
 \enspace .  \qedhere
\end{equation*}
\end{proof}

In \thmref{t:statedependentnqubitprotocol}, the definition of $\ket \Psi$ requires adding to $\H$ an additional ancilla register $(\C^2)^{\otimes 2n}$.  It is therefore not clear that the theorem should imply an exponential lower bound on the dimension of~$\H$.  In fact, though, it does lower-bound $\dim \H$: 

\begin{corollary} \label{t:statedependentnqubitdimensionlowerbound}
If the protocol in \figref{f:statedependentnqubitprotocol} accepts with probability at least $1 - \epsilon$, then 
\begin{equation*}
\dim \H \geq \big( 1 - O(n^2 \epsilon) \big) \, 2^n
 \enspace .
\end{equation*}
\end{corollary}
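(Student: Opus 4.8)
The plan is to deduce the dimension lower bound from an upper bound on the purity of the reduced state of $\ket\Psi$ on the original space~$\H$. Let $\rho = \Tr_{(\C^2)^{\otimes 2n}} \ketbra\Psi\Psi$ be that reduced state. For any density matrix, $\rank(\rho) \geq 1/\Tr(\rho^2)$ (Cauchy--Schwarz on the eigenvalues), and $\dim\H \geq \rank(\rho)$; so it suffices to prove
\[
\Tr(\rho^2) \leq \big(1 + O(n^2 \eps)\big)\, 2^{-n}
 \enspace ,
\]
which then gives $\dim\H \geq (1 - O(n^2\eps)) 2^n$. Two descriptions of $\rho$ will be used. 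On one hand, by the observation in \figref{f:addneprstates}, $\rho$ is exactly the state obtained from $\ketbra\psi\psi$ by depolarizing each of the $n$ qubits $X_j, Z_j$ in turn (each $\{X_j,Z_j\}=0$, so each is a genuine virtual qubit). On the other hand, $\rho$ is the $\H$-marginal of $\ket\Psi$, which is precisely where \thmref{t:statedependentnqubitprotocol} applies.

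\textbf{The ideal value.} Because $\ket\Psi$ is built by swapping each qubit with the first half of a fresh EPR pair, the reduced state of $\ket\Psi$ on the $(\C^2)^{\otimes n}$ factor carrying the \emph{independent} qubits $\hat X_j, \hat Z_j$ is exactly $\Id/2^n$. Via the swap trick on two copies, $\Tr(\rho^2) = \bra\Psi\bra\Psi \,\SWAP_\H\, \ket\Psi\ket\Psi$, where $\SWAP_\H$ swaps the two $\H$ registers, while the corresponding quantity for the independent qubits is $\bra\Psi\bra\Psi \,\SWAP_\V\, \ket\Psi\ket\Psi = 2^{-n}$ exactly, where $\V = (\C^2)^{\otimes n}$ carries the $\hat X_j, \hat Z_j$. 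Thus $\Tr(\rho^2) = 2^{-n} + \bra\Psi\bra\Psi (\SWAP_\H - \SWAP_\V)\ket\Psi\ket\Psi$, and the whole task is to bound the correction term by $O(n^2\eps)\,2^{-n}$. Equivalently, in the depolarization picture, the genuinely independent qubits produce a state $\tfrac{\Id}{2^n}\otimes(\,\cdot\,)$ of purity at most $2^{-n}$, and one must show the \emph{overlapping} qubits do no worse than a $(1+O(n^2\eps))$ factor.

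\textbf{Perturbing to the ideal value.} To control the correction I would write $\SWAP_\V = 2^{-n} \sum_s \hat W_s \otimes \hat W_s^\dagger$ over the $4^n$ Pauli words $\hat W_s = \prod_j \hat P_j$ of the independent qubits, which manifestly involves only the $\hat X_j, \hat Z_j$. The theorem's single-operator estimate \eqnref{e:normbound0}, $\bignorm{(U_j - \hat U_j)\ket\Psi} \leq 8\sqrt{2\eps_j}$, lets one replace each $\hat W_s$ acting on $\ket\Psi$ by the corresponding overlapping word $W_s = \prod_j P_j$, and the fully-replaced operator has $\ket\Psi$-expectation equal to $\Tr(\rho^2)$. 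Crucially this replacement must be carried out \emph{one qubit at a time}, as a hybrid, so that the errors add rather than compound: each of the $n$ substitutions costs $O(n\eps_j)$ in the relevant inner product, and summing over $j$ using $\tfrac1n\sum_j \eps_j = \eps$ yields a total correction of $O\!\big(n \sum_j \eps_j\big) = O(n^2\eps)$, as required.

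\textbf{Main obstacle.} The delicate point is precisely this error accounting, and keeping it \emph{linear}. A naive expansion of $\Tr(\rho^2)$ over all $4^n$ Pauli words --- bounding each $\abs{\bra\Psi W_s \ket\Psi} \le \delta_{s,0} + \bignorm{(W_s - \hat W_s)\ket\Psi} = \delta_{s,0} + O(n^2\sqrt\eps)$ termwise --- fails badly: squaring and summing $4^n$ such errors reintroduces an exponential factor $4^n \eps$ and destroys the bound. Only a hybrid/qubit-local argument, which ever compares a single qubit's operator to its independent counterpart and never unions over the exponentially many words, can avoid this blow-up. (That the full-strength hypothesis is genuinely needed is confirmed by \lemref{t:kcommute}: pairwise commutation on $\ket\psi$ alone is consistent with $\dim\H = O(n^2) \ll 2^n$, so the protocol's higher-order guarantees are essential to the argument.) Establishing the per-qubit replacement bound rigorously, and verifying that the reduction from $\SWAP_\H$ to the word expansion holds on the support of $\ket\Psi\ket\Psi$, is the technical heart of the proof.
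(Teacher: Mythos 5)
Your reduction is sound as far as it goes---$\rank(\rho) \geq 1/\Tr(\rho^2)$ and $\dim\H \geq \rank(\rho)$ are fine, and the ideal value $\bra\Psi\bra\Psi\,\SWAP_\V\,\ket\Psi\ket\Psi = 2^{-n}$ is correct---but it makes the target \emph{exponentially} demanding, and your own error accounting does not meet it. As you yourself note at the end of your second paragraph, to conclude $\dim\H \geq (1-O(n^2\epsilon))2^n$ you need the correction term bounded by $O(n^2\epsilon)\cdot 2^{-n}$, i.e.\ exponentially small in absolute terms. Yet your hybrid accounting claims a total correction of $O(n\sum_j \epsilon_j) = O(n^2\epsilon)$, which misses that target by a factor of $2^n$. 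Plugging $\Tr(\rho^2) \leq 2^{-n} + O(n^2\epsilon)$ into your own inequality gives $\rank(\rho) \geq 2^n/\big(1+O(2^n n^2\epsilon)\big)$, which is vacuous unless $\epsilon$ is itself exponentially small. The only quantitative input available, $\norm{(U_j - \hat U_j)\ket\Psi} \leq 8\sqrt{2\epsilon_j}$, is an \emph{absolute} bound on a state norm, and nothing in the sketch produces the needed $2^{-n}$ suppression; this is not a loose constant but the crux of why a purity-based argument does not close.

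The specific mechanism you propose to close it is also unjustified, and false as stated. The identity $\SWAP_\V = 2^{-n}\sum_s \hat W_s \otimes \hat W_s^\dagger$ holds because the $\hat W_s$ form an orthogonal unitary basis for the operator algebra of the genuine tensor factor $\V$. The analogous claim for the overlapping words---that $2^{-n}\sum_s W_s \otimes W_s^\dagger$ has $\ket\Psi\ket\Psi$-expectation equal to $\Tr(\rho^2)$, i.e.\ $2^{-n}\sum_s \abs{\bra\Psi W_s \ket\Psi}^2 = \Tr(\rho^2)$---is not an identity: the $4^n$ words $W_s$ cannot even be linearly independent as operators on $\H$ when $\dim\H < 2^n$, so requiring them to behave like an orthogonal operator basis of $\H$ with normalization $2^n$ essentially presupposes the conclusion; the step is circular, and you flag it yourself as the unverified ``technical heart.'' The paper's proof avoids exactly this trap by never working against an exponentially small target. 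It forms the $4^n$ vectors $\ket{\Psi_{a,b}} = \big(X_n^{a_n}Z_n^{b_n}\big)\cdots\big(X_1^{a_1}Z_1^{b_1}\big)\ket\Psi$, shows by a one-qubit-at-a-time hybrid (the same linear accounting you want, but carried out in state norm, where $O(\sqrt{\epsilon_j})$ per step suffices) that each is within $\sum_k 8\sqrt{2\epsilon_k} = O(n\sqrt\epsilon)$ of the exactly orthonormal vectors $\ket{\hat\Psi_{a,b}}$, and then applies a Frobenius-norm singular-value perturbation to $B = \sum_{a,b}\ketbra{\Psi_{a,b}}{a,b}$ versus the isometry $\hat B$: since the $4^n$ squared errors, each $O(n^2\epsilon)$, are compared against $4^n$ unit singular values, one gets the \emph{relative} bound $\rank(B) = \dim\Span\{\ket{\Psi_{a,b}}\} \geq (1-O(n^2\epsilon))4^n$ with no exponential precision required anywhere. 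A final Schmidt-decomposition argument shows all $\ket{\Psi_{a,b}}$ lie in a subspace of $\H'$ of dimension at most $(\dim\H)^2$, whence $(\dim\H)^2 \geq (1-O(n^2\epsilon))4^n$. If you wish to salvage the purity framing, you must exhibit the cancellation that converts absolute $O(\sqrt{\epsilon_j})$ errors into errors suppressed by $2^{-n}$; the rank/counting formulation is precisely the device that makes such cancellation unnecessary.
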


\begin{proof}
For $(a, b) \in \{0,1\}^n \times \{0,1\}^n$ let 
\begin{equation*}
\ket{\Psi_{a,b}} 
= \big( X_n^{a_n} Z_n^{b_n} \big) \cdots \big( X_1^{a_1} Z_1^{b_1} \big) \ket \Psi
 \enspace .
\end{equation*}

\begin{claim}
The $\ket{\Psi_{a,b}}$ satisfy 
$\dim \Span \{ \ket{\Psi_{a,b}} \} \geq \big( 1 - O(n^2 \epsilon) \big) 4^n$.  
\end{claim}

\begin{proof}
Let $B = \sum_{a, b} \ketbra{\Psi_{a,b}}{a, b}$.  
Adopt the notation from the proof of \thmref{t:statedependentnqubitprotocol}.  
For $k \in \{0, \ldots, n\}$ define $\ket{\hat \Psi^{(k)}_{a,b}}$ similarly to $\ket{\Psi_{a,b}}$, except using the operators $\hat X_j$ and $\hat Z_j$ in place of $X_j$ and~$Z_j$ for $j \leq k$. Thus $\ket{\hat \Psi^{(0)}_{a,b}} = \ket{ \Psi_{a,b}}$.  Let $\ket{\hat \Psi_{a,b}} = \ket{ \hat \Psi^{(n)}_{a,b}}$ and define $\hat{B}$ as $B$ using the $\ket{\hat \Psi_{a,b}}$ instead of $\ket{\Psi_{a,b}}$.  Using the triangle inequality and $\norm{X_j}, \norm{Z_j} \leq 1$,
\begin{align}
\bignorm{ \ket{\hat \Psi_{a,b}} - \ket{\Psi_{a,b}} } 
&\leq \sum_{k=1}^n \bignorm{\ket{\hat \Psi^{(k)}_{a,b}} - \ket{\hat \Psi^{(k-1)}_{a,b}}} \notag\\
&\leq \sum_{k=1}^n \Bignorm{\big( \hat X_k^{a_k} \hat Z_k^{b_k} - X_k^{a_k} Z_k^{b_k} \big) \Big( \prod_{j < k} \hat X_j^{a_j} \hat Z_j^{b_j} \Big) \ket{\Psi}}
 \enspace . \label{e:normbound2a}
\end{align}
By Eq.~\eqnref{e:switch1} from the proof of \thmref{t:statedependentnqubitprotocol}, $\hat P_j \ket \Psi = P_{j''}^T \ket \Psi$, where $P_{j''}$ acts only on the $j''$ ancilla qubit and therefore commutes with all $Q_k$ and $\hat Q_k$.  Thus for any~$k \in [n]$, 
\begin{align*}
\big( \hat X_k^{a_k} \hat Z_k^{b_k} - X_k^{a_k} Z_k^{b_k} \big) \Big( \prod_{j < k} \hat X_j^{a_j} \hat Z_j^{b_j} \Big)\ket \Psi
&=\Big( \prod_{j < k} \big(X_{j''}^{a_j} Z_{j''}^{b_j} \big)^T \Big) \big(\hat X_k^{a_k} \hat Z_k^{b_k} - X_k^{a_k} Z_k^{b_k} \big) \ket \Psi
 \enspace .
\end{align*}
Thus starting from Eq.~\eqnref{e:normbound2a} and applying~\eqnref{e:normbound0}, we obtain the bound
\begin{equation} \label{e:normbound2}
\bignorm{ \ket{\hat \Psi_{a,b}} - \ket{\Psi_{a,b}}} 
\leq \sum_{k=1}^n 8 \sqrt{2 \epsilon_k}
  \enspace .
\end{equation}
Moreover, the $\ket{\hat \Psi_{a,b}}$ vectors are orthonormal: 
\begin{align*}
\braket{\hat \Psi_{a,b}}{\hat \Psi_{c,d}}
&= \bra{\Psi_0} \S_{1,n} \prod_{j=1}^n \big( \hat Z_j^{b_j} \hat X_j^{a_j + c_j} \hat Z_j^{d_j} \big) \S_{n,1} \ket{\Psi_0} \\
&= (-1)^{(a + c) \cdot b} \bra{\EPRstate}^{\otimes n} \prod_{j=1}^n \big( (\sigma^x_{j'})^{a_j+c_j} (\sigma^z_{j'})^{b_j+d_j} \big) \ket{\EPRstate}^{\otimes n} \\
&= \delta_{a,c} \delta_{b,d}
 \enspace .
\end{align*}
Therefore $\hat B$ is an isometry.  Its singular values are $1$ with multiplicity $4^n$.  Let $\lambda_1 \geq \cdots \geq \lambda_{4^n} \geq 0$ be the singular values of~$B$.  (Some $\lambda_i$ may be zero.)  Then, relating the singular values of $B$ and~$\hat B$ to the Frobenius norm of their difference, 
\begin{align*}
\sum_i \abs{\lambda_i - 1}^2 
&\leq \norm{B - \hat B}{}_F^2 \\
&= \sum_{a,b} \bignorm{\ket{\Psi_{a,b}} - \ket{\hat \Psi_{a,b}}}^2 \\
&\leq 4^n \cdot 128 \cdot n^2 \epsilon
 \enspace ,
\end{align*}
where the last bound is by Eq.~\eqnref{e:normbound2} and $\sum_k \epsilon_k = n \epsilon$.  Since the left-hand side is at least $4^n - \rank(B)$, we obtain $\rank(B) \geq \big( 1 - O(n^2 \epsilon) \big) 4^n$.  
\end{proof}

Let $\ket \Psi$ have Schmidt decomposition $\ket \Psi = \sum_{i = 1}^{d} \sqrt{p_i} \ket{u_i} \otimes \ket{v_i}$ across the partition $\H$, $(\C^2)^{\otimes 2 n}$.  Extend the set $\{\ket{u_1}, \ldots, \ket{u_d}\}$, if necessary, to form an orthonormal basis for~$\H$.  The vectors $\ket{\Psi_{a,b}}$ are obtained from $\ket \Psi$ by applying operators $X_j, Z_j$ supported only on~$\H$.  Therefore, they lie in the span of $\{ \ket{u_i} \otimes \ket{v_j} : i \in [\dim \H], j \in [d] \}$.  In particular, $\dim \Span \{\ket{\Psi_{a,b}}\} \leq d \dim \H \leq (\dim \H)^2$, as desired.  
\end{proof}

\begin{remark}
In \thmref{t:manynearlyindependentqubits}, different qubits overlapping by $\epsilon = O(1/n)$ already implies $\dim \H \geq 2^n$.  In contrast, in \corref{t:statedependentnqubitdimensionlowerbound}, $\epsilon$ must be exponentially small before $\dim \H \geq 2^n$ is required.  Is this polynomial versus exponential separation a consequence of loose analysis, an inherent drawback of the protocol in \figref{f:statedependentnqubitprotocol}, or an inherent property of any efficient state-dependent qubit testing protocol?  

\newcommand\restrict[1]{\raisebox{-.5ex}{$|$}{}_{#1}^{}}

The following example suggests at least that our analysis is not too loose.   Let $\H = \Span\{ \ket x : x \neq 0^n, 1^n \} \subset (\C^2)^{\otimes n}$.  Define $n$ qubits by $Z_j = \sigma^z_j \restrict{\H}$ and $X_j = \sigma^x_j \restrict{\H} + \sigma^x_j (\ketbra{1^n}{0^n} + \ketbra{0^n}{1^n}) \sigma^x_j$.  That is, while $\sigma^x_j$ maps the basis states $\sigma^x_j \ket{0^n}$ and $\sigma^x_j \ket{1^n}$ outside of~$\H$, $X_j$ instead maps them to each other.  Even though $\dim \H = 2^n - 2 < 2^n$, it seems that these $n$ qubits can pass our testing protocol with probability $1 - 1/\exp(n)$.\footnote{A natural generalization of this construction removes all strings of Hamming weight $< t$ or $> n-t$, with $Z_j = \sigma^z_j \restrict{\H}$ and $X_j \ket x = \sigma^x_j \ket x$ except $X_j \ket x = \ket{\overline x}$ when $\sigma^x_j \ket x$ would cross the boundary. We omit the details.}  
\end{remark}

\ifx\compilefullpaper\undefined  
\bibliographystyle{alpha-eprint}
\bibliography{q}

\end{document}
\fi

\subsection*{Acknowledgements}

We would like to thank Greg Kuperberg for helpful comments, particularly regarding the proof of \thmref{t:qubitpacking}.  
R.C., B.R.~and C.S.~supported by NSF grant CCF-1254119 and ARO grant W911NF-12-1-0541.  T.V.~supported by NSF CAREER grant CCF-1553477, an AFOSR YIP award, and the IQIM, an NSF Physics Frontiers Center (NFS Grant PHY-1125565) with support of the Gordon and Betty Moore Foundation (GBMF-12500028).

\bibliographystyle{alpha-eprint}
\bibliography{q}

\appendix

\section{Qubit packing using the exterior algebra} \label{s:qubitpackingprooftwo}

An alternative proof of \thmref{t:qubitpacking} was suggested to the authors by Greg Kuperberg~\cite{kuperberg14personal}.  The rough idea is to begin by packing nearly orthogonal unit vectors in $\R^n$, then define qubits using fermion creation and annihilation operators on the $2^n$-dimensional exterior algebra.   

\begin{proof}[Proof of \thmref{t:qubitpacking}]
By the Johnson-Lindenstrauss Lemma~\cite{JohnsonLindenstrauss84, DasguptaGupta03JohnsonLindenstrauss}, $e^{n \epsilon^2 / 8}$ unit vectors can be chosen in~$\R^n$ so that for any pair $\ket u, \ket v$, $\abs{\braket u v} \leq \epsilon$.  Pairing these vectors up arbitrarily, we obtain $m = \tfrac12 e^{n \epsilon^2 / 8}$ two-dimensional planes the angles between any two of which are in the range $(\tfrac\pi2 - \epsilon, \tfrac\pi2]$.  

If $\ket 1, \ldots, \ket n$ is a basis for $\R^n$, let $\Lambda(\R^n)$ be the $2^n$-dimensional exterior algebra, with basis $\ket{i_1} \wedge \ket{i_2} \wedge \ldots \wedge \ket{i_k}$ for $i_1, \ldots, i_k \in [n]$ and $k = 0, 1, \ldots, n$.  For a unit vector $\ket v \in \R^n$ and $\ket w \in \Lambda(\R^n)$, define the fermion creation and annihilation operators 
\begin{equation*}\begin{split}
a_v^\dagger \ket w &= \ket v \wedge \ket w \\
a_v \ket w &= (\bra v \otimes \identity) \ket w
 \enspace .
\end{split}\end{equation*}
Observe that this definition is basis independent, in the sense that for any unitary $R$ on $\R^n$, 
\begin{equation*}\begin{split}
a_{R v}^\dagger \hat R \ket w &= \hat R a_v^\dagger \ket w \\
a_{R v} \hat R \ket w &= \hat R a_v \ket w
 \enspace ,
\end{split}\end{equation*}
where $\hat R (\ket{v_1} \wedge \cdots \wedge \ket{v_k}) = (R \ket{v_1}) \wedge \cdots \wedge (R \ket{v_k})$.  

If we choose a basis for $\R^n$ beginning with $\ket v$, then $a_v^\dagger a_v$ projects onto those basis terms in $\Lambda(\R^n)$ that include $\ket v$, while $a_v a_v^\dagger$ projects onto the complementary set of basis terms.  Thus $a_v^\dagger a_v + a_v a_v^\dagger = \identity$, while also $a_v^2 = (a_v^\dagger)^2 = 0$.  Furthermore, if $\ket u$ is a unit vector perpendicular to $\ket v$, then the anticommutators satisfy $\{a_v, a_u\} = \{a_v^\dagger, a_u^\dagger\} = 0$, as $\ket u \wedge \ket v = - \ket v \wedge \ket u$, while if $\ket w$ has $k$ terms, 
\begin{align*}
a_u a_v^\dagger \ket w 
&= (\bra u \otimes \identity)(\ket v \wedge \ket w) \\
&= (-1)^k (\bra u \otimes \identity \ket w \wedge \ket v \\
&= -a_v^\dagger a_u \ket w
 \enspace .
\end{align*}
Thus $\{a_u, a_v^\dagger\} = 0$.  

Now for each of the $m$ pairwise nearly orthogonal planes, let $\{ \ket{u_j}, \ket{v_j} \}$ constitute an orthonormal basis.  Define 
\begin{equation}\begin{split}
X_j &= (-a_{u_j} + a_{u_j}^\dagger)(a_{v_j} + a_{v_j}^\dagger) \\
Z_j &= 2 a_{v_j} a_{v_j}^\dagger - \identity = a_{v_j} a_{v_j}^\dagger - a_{v_j}^\dagger a_{v_j}
 \enspace .
\end{split}\end{equation}
To understand this construction, observe that for orthonormal vectors $\ket u, \ket v \in \R^n$, and any $\ket w \in \Lambda(\R^n)$ with $a_u \ket w = a_v \ket w = 0$, the operators $a_u, a_u^\dagger, a_v, a_v^\dagger$ fix the subspace spanned by $\ket w, \ket v \wedge \ket w, \ket u \wedge \ket w, \ket u \wedge \ket v \wedge \ket w$.  In this basis, 
\begin{equation*}
a_u = \fastmatrix{0&0&1&0\\0&0&0&1\\0&0&0&0\\0&0&0&0}
\qquad 
a_v = \fastmatrix{0&1&0&0\\0&0&0&0\\0&0&0&-1\\0&0&0&0}
 \enspace .
\end{equation*}
Hence, 
\begin{gather*}
(-a_u + a_u^\dagger)(a_v + a_v^\dagger) = \fastmatrix{0&0&0&1\\0&0&1&0\\0&1&0&0\\1&0&0&0}
\qquad
2 a_v a_v^\dagger - \identity = \fastmatrix{1&0&0&0\\0&-1&0&0\\0&0&1&0\\0&0&0&-1}
 \enspace .
\end{gather*}
The former matrix is $\sigma_X \otimes \sigma_X$, and the latter matrix is $I \otimes \sigma_Z$, where $\sigma_X, \sigma_Z$ are the standard Pauli operators.  In particular, observe that $X_j^2 = Z_j^2 = \identity$, $X_j Z_j = -Z_j X_j$.  

The above construction satisfies that if $\ket{u_1}, \ket{v_1}, \ket{u_2}, \ket{v_2}$ are pairwise orthogonal, then $[X_1, X_2] = [X_1, Z_2] = [Z_1, X_2] = [Z_1, Z_2] = 0$.  The reason we use two vectors to define each $X_j, Z_j$ (instead of just taking $X = a_u + a_u^\dagger$, $Z = 2 a_u a_u^\dagger - \identity$) is to obtain the above commutation relationships.  Since $X_1, Z_1$ are each quadratic in $a_{u_1}, a_{u_1}^\dagger, a_{v_1}, a_{v_1}^\dagger$, terms involving only $a_{u_2}, a_{u_2}^\dagger, a_{v_2}, a_{v_2}^\dagger$ commute past them.  

Next, for \emph{nearly} orthogonal planes we will show that the commutator norm $\norm{[S_i, T_j]} = O(\epsilon)$, for $i\neq j$ and $S,T\in\{X,Z\}$.  

If $\ket u, \ket v$ are orthonormal, and $\ket t = \epsilon \ket u + \sqrt{1-\epsilon^2} \ket v$, then 
\begin{equation*}
a_t = \epsilon a_u + \sqrt{1 - \epsilon^2} a_v = \fastmatrix{0&\sqrt{1-\epsilon^2}&\epsilon&0\\0&0&0&\epsilon\\0&0&0&-\sqrt{1-\epsilon^2}\\0&0&0&0}
\end{equation*}
satisfies $\{a_t, a_u\} = 0$, $\{a_t, a_u^\dagger\} = \epsilon \identity$.  In general, 
\begin{align*}
\{a_t, a_u\} &= 0 \\
\{a_t, a_u^\dagger\} &= \braket{u}{t} \identity
 \enspace .
\end{align*}

It follows that if $\abs{\braket{u_1}{u_2}}, \abs{\braket{u_1}{v_2}}, \abs{\braket{v_1}{u_2}}, \abs{\braket{v_1}{v_2}} \leq \epsilon$, then $\norm{[S_1, T_2]} = O(\epsilon)$ for $S,T\in\{X,Z\}$.  Indeed, 
\begin{align*}
X_1 a_{u_2}
&= (-a_{u_1} + a_{u_1}^\dagger)(a_{v_1} + a_{v_1}^\dagger) a_{u_2} \\
&= -(-a_{u_1} + a_{u_1}^\dagger) \big[ a_{u_2} (a_{v_1} + a_{v_1}^\dagger) - \braket{u_2}{v_1} \identity \big] \\
&= \big[ a_{u_2} (-a_{u_1} + a_{u_1}^\dagger) - \braket{u_2}{u_1} \identity \big] (a_{v_1} + a_{v_1}^\dagger) + \braket{u_2}{v_1} (-a_{u_1} + a_{u_1}^\dagger) \\
&= a_{u_2} X_1 - \braket{u_2}{u_1} (a_{v_1} + a_{v_1}^\dagger) + \braket{u_2}{v_1} (-a_{u_1} + a_{u_1}^\dagger)
 \enspace ,
\end{align*}
implying $\norm{[X_1, a_{u_2}]} \leq \abs{\braket{u_2}{u_1}} + \abs{\braket{u_2}{v_1}} \leq 2 \epsilon$.  Similarly, 
\begin{align*}
Z_1 a_{u_2} 
&= (2 a_{u_1} a_{u_1}^\dagger - \identity) a_{u_2} \\
&= 2 a_{u_1} (\braket{u_1}{u_2} \identity - a_{u_2} a_{u_1}^\dagger) - a_{u_2} \\
&= a_{u_2} Z_1 + 2 \abs{\braket{u_1}{u_2}} a_{u_1}
 \enspace ,
\end{align*}
implying $\norm{[Z_1, a_{u_2}]} \leq 2 \abs{\braket{u_1}{u_2}} \leq 2 \epsilon$.  Thus $\norm{[S_1, T_2]} \leq c \, \epsilon$ for a fairly small constant~$c$.  
\end{proof}

\end{document}